\newtheorem{thm}{\textbf{\text{Theorem}}}
\newtheorem{pro}{\textbf{\text{Proposition}}}
\newtheorem{pb}{\textbf{\text{Problem}}}
\newtheorem{rmk}{\textbf{\text{Remark}}}
\newcommand{\ie}{\textit{i.e.}}
\newcommand{\eg}{\textit{e.g.}}
\title{\LARGE \bf
Attitude Synchronization on $SO(3)$ for Heterogeneous Multi-Agent Systems Using Vector Measurements}
\author{Mouaad Boughellaba, Soulaimane Berkane, and Abdelhamid Tayebi
\thanks{This work was supported by the National Sciences and Engineering Research Council of Canada (NSERC), under the grants NSERC-DG RGPIN 2020-06270 and NSERC-DG RGPIN-2020-04759, and by Fonds de recherche du Qu\'ebec (FRQ).} 
\thanks{M. Boughellaba and A. Tayebi are with the Department of Electrical Engineering, Lakehead University, Thunder Bay, ON P7B 5E1, Canada \tt\small \{mboughel,atayebi\}@lakeheadu.ca.} 
\thanks{S. Berkane is with the Department of Computer Science and Engineering, University of Quebec in Outaouais, Gatineau, QC, Canada. {\tt\small Soulaimane.Berkane@uqo.ca}}
}
\begin{document}

\maketitle
\thispagestyle{empty}
\pagestyle{empty}

\begin{abstract}
 This paper addresses the distributed attitude synchronization problem for a network of rigid-body systems on the special orthogonal group $SO(3)$. Each agent measures, in its body frame, its own angular velocity and a set of vectors whose corresponding directions in the inertial frame are unknown. Under an undirected, connected, and acyclic interaction graph topology, we develop four distributed synchronization schemes relying solely on local vector measurements, without the need for attitude estimation and attitude exchange between agents. Specifically, two leaderless schemes are proposed at the kinematic and dynamic levels to achieve synchronization to a common unknown orientation. In addition, two leader–follower schemes are proposed to align all agents with a prescribed constant orientation defined by reference vector measurements available only to a designated leader. All control laws are formulated directly on $SO(3)$, preserving the geometric structure of the attitude dynamics. A rigorous stability analysis is provided showing that the closed-loop systems achieve almost global asymptotic stability, which is the strongest stability property one can achieve on $SO(3)$ with smooth controllers. 
Numerical simulations are provided to illustrate the effectiveness and performance of the proposed distributed control schemes.
\end{abstract}

\section{Introduction}
Attitude synchronization is a fundamental problem in the coordination and control of multi-agent rigid-body systems. This problem is central in a wide range of applications, including spacecraft formation flying, cooperative exploration and manipulation. Efficient and robust attitude synchronization is crucial for ensuring coordinated maneuvering and optimal task execution in these multi-agent systems, especially in scenarios where agents operate in decentralized environments with limited communication capabilities. The challenge of attitude synchronization arises from several factors: the nonlinear nature of the attitude kinematics and dynamics, the restrictions imposed by the interaction graph topology, and the need for control strategies that are directly designed on the rotation manifold, specifically the special orthogonal group \( SO(3) \). These complexities make the design of distributed control algorithms that ensure strong stability guarantees and robustness a nontrivial task.

Traditional approaches to attitude synchronization often rely on parameterized representations of rotations, such as Euler angles (EA), Modified Rodrigues Parameters (MRP), and unit quaternions. Early works based on EA and MRP \cite{Dimos_SCL_2009,Bayezit_TIE_2013,Ren_TCAT_2010,Xin_automatic_2020,Ziyang_automatic_2010,Chen_TAES2019} provide convenient minimal representations evolving in the Euclidean space $\mathbb{R}^3$. However, these parameterizations are not homomorphic to the rotation group $SO(3)$ and suffer from singularities, resulting in only local synchronization guarantees and issues such as gimbal lock in certain configurations \cite{Ren_TCAT_2010}. To overcome these limitations, several studies have adopted the unit-quaternion representation \cite{Ren_IJACSP2007,BAI20083170,Liu_automatic2018,Pedro_TAC2020,SAVINO2020142,Zhang_TCS2022}, which provides a global and nonsingular description of rigid-body orientation \cite{Shuster1993ASO}. Quaternion-based synchronization schemes have also been developed using virtual dynamics techniques to eliminate the need for angular velocity measurements \cite{Tayebi_TAC2008,Abdessameud_TAC2009,Abdessameud_TAC2012}. Nevertheless, the unit-quaternion space constitutes a double cover of $SO(3)$, which introduces redundancy and may lead to the undesirable unwinding phenomenon if not properly addressed. To mitigate this issue, quaternion-based control strategies have been developed, achieving global asymptotic stability while preventing unwinding through the use of logic-based switching mechanisms \cite{Mayhew_TAC2011,GUI2018225,Huang_automatic2021}. Despite these developments, the inherent limitations associated with parameterized representations motivate the adoption of geometric control and synchronization methods that operate directly on the rotation manifold $SO(3)$, thereby avoiding singularities and redundancy \cite{Shuster1993ASO}. 
Moreover, These methods enable the development of distributed attitude synchronization schemes that preserve the structure of the rotation manifold, ensuring both mathematical rigor and strong stability properties. Several such schemes on \( SO(3) \) have been proposed in the literature \cite{Maadani_ACC2020,Tron_CDC2012,Tron_TAC2012,Markdahl_TAC2020,SARLETTE2009572,SARLETTE20072232,Alain_SIAM,Wei_TAC2018,Maadani_2022}. Recent work by the authors in \cite{Mouaad_ACC24,Mouaad_TAC2026} proposed attitude synchronization schemes, both with and without angular velocity measurements, directly on $SO(3)$, ensuring global asymptotic stability. In \cite{li2026}, the authors extended the continuous attitude synchronization design in \cite{Mouaad_TAC2026} to tackle the leader-follower attitude synchronization problem. However, most of these approaches rely on full state exchange, where agents exchange either relative or absolute orientations, a requirement that can be challenging to fulfill in practical settings due to the lack of low-cost sensing solutions that provide attitude measurements. To address this, the implementation of these schemes often necessitates the use of an attitude observer, which adds complexity to their deployment.

The attitude synchronization problem under partial state exchange where the neighboring agents share vector measurements rather than full attitude information, is an interesting topic that has not been comprehensively addressed in the literature. The concept of using vector measurements directly in the control, without estimating the attitude, has been exploited in single-agent systems such as in \cite{Tayebi_TAC2013,Tayebi_TAC_2018}. However, the extension of this concept to multi-agent attitude synchronization remains limited in the literature. The works in \cite{Lageman_CDC2009,Sarlette_SIAM2012} developed leaderless attitude synchronization schemes based on unit inter-agent vector measurements. Similarly, \cite{Tran_ACC22} proposed leader-follower synchronization schemes that also use inter-agent vector measurements, yet these approaches primarily focus on the kinematic level of the rotation manifold. Building upon these foundational works, \cite{Thakur_CDC2015} proposed synchronization schemes at the dynamic-level, based on vector measurements, considering both leaderless and leader-follower structures. However, these studies either establish local asymptotic stability or boundedness and convergence results, but fail to provide strong asymptotic stability guarantees such as global or almost-global asymptotic stability.

To address the aforementioned gap, this paper investigates the leaderless and leader–follower attitude synchronization problem for a network of rigid-body systems evolving on $SO(3)$ under an undirected, connected, and acyclic graph topology. Each agent measures its own angular velocity and has access to local measurements of inertial unit-vectors expressed in its body frame, where the corresponding inertial-frame directions are unknown. The contributions of this work can be summarized as follows:
\begin{enumerate}
\item We propose two distributed \textit{leaderless} attitude synchronization schemes formulated at the kinematic and dynamic levels, endowed with almost global asymptotic stability guarantees. Both schemes are designed directly on $SO(3)$ and rely only on local vector measurements, without requiring full-state exchange between neighboring agents, while ensuring that the orientations of all agents converge to a common unknown orientation from almost all initial conditions.

\item We propose two distributed \textit{leader–follower} synchronization schemes, at the kinematic and dynamic levels, endowed with almost global asymptotic stability guarantees. These schemes allow to align all agents orientations with a prescribed constant orientation, defined through reference vector measurements, available to a single agent (the leader). The proposed designs preserve the distributed architecture and the same local measurement requirements as in the leaderless case.

\item For all proposed schemes, we provide a rigorous stability analysis demonstrating that the closed-loop systems enjoy almost global asymptotic stability, the strongest stability result one can achieve with smooth feedback on $SO(3)$.

\item We demonstrate the practical advantage of using local vector measurements in attitude synchronization, which allows for distributed attitude synchronization to a common orientation without requiring global attitude information. Notably, the measurement vectors are unknown in the inertial frame, yet the proposed schemes still enable effective synchronization based on local vector measurements alone.

\end{enumerate}

To the best of the authors knowledge this is the first work in the literature providing almost global asymptotic stability results for the leaderless and leader–follower attitude synchronization problems under the considered measurement model and communication graph topology.

A preliminary version of this work was presented in \cite{Mouaad_CDC2025}. However, \cite{Mouaad_CDC2025} did not include the leader–follower attitude synchronization schemes developed at the kinematic and dynamic levels, nor the analysis of the leaderless scheme for synchronization to a common time-varying orientation, which are presented in this paper. Furthermore, in contrast to the conference version \cite{Mouaad_CDC2025}, the present paper provides complete stability analyses for the leaderless synchronization schemes at both the kinematic and dynamic levels. The paper is also expanded with additional discussions, critical remarks, and more illustrative simulation results.

The remainder of this paper is organized as follows. Section~\ref{s2} presents the mathematical preliminaries and notation used throughout the paper. Section~\ref{s3} formulates the attitude synchronization problem and states the main objectives. The leaderless distributed attitude synchronization schemes are developed and analyzed in Section~\ref{leaderless_section}. Section~\ref{leader-follower_section} presents the leader–follower distributed synchronization schemes along with their stability analysis. Numerical simulations illustrating the effectiveness and performance of the proposed schemes are provided in Section~\ref{s6}. Finally, conclusions and directions for future work are given in Section~\ref{s7}.

\section{Preliminaries}\label{s2}

\subsection{Notations}
\noindent The sets of real numbers and the $n$-dimensional Euclidean space  are denoted by $\mathbb{R}$ and $\mathbb{R}^n$, respectively. The set of unit vectors in $\mathbb{R}^n$ is defined as $\mathbb{S}^{n-1}:=\{x\in \mathbb{R}^n~|~x^\top  x =1\}$. Given two matrices $A$,$B$ $\in \mathbb{R}^{m\times n}$, their Euclidean inner product is defined as $\langle \langle A,B \rangle \rangle=\text{tr}(A^\top  B)$. The Euclidean norm of a vector $x \in \mathbb{R}^n$ is defined as $||x||=\sqrt{x^\top  x}$. The matrix $I_n \in \mathbb{R}^{n \times n}$ denotes the identity matrix, and $\textbf{1}_n=[1\hdots1]^\top  \in \mathbb{R}^n$. Consider a smooth manifold $\mathcal{Q}$ with $\mathcal{T}_x \mathcal{Q}$ being its tangent space at point $x \in \mathcal{Q}$. Let $f: \mathcal{Q} \rightarrow \mathbb{R}_{\geq 0}$ be a continuously differentiable real-valued function. The function $f$ is a potential function on $\mathcal{Q}$ with respect to set $\mathcal{B} \subset \mathcal{Q}$ if $f(x)=0$, $\forall x \in \mathcal{B}$, and $f(x) > 0$, $\forall x\notin \mathcal{B}$. The gradient of $f$ at $x \in \mathcal{Q}$, denoted by $\nabla_x f(x)$, is defined as the unique element of $\mathcal{T}_x \mathcal{Q}$ such that $\dot f(x)=\langle \nabla_x f(x), \eta\rangle_x$, $\forall \eta \in \mathcal{T}_x \mathcal{Q}$, where $\langle~,~\rangle_x:\mathcal{T}_x \mathcal{Q} \times \mathcal{T}_x \mathcal{Q} \rightarrow \mathbb{R}$ is Riemannian metric on $\mathcal{Q}$ \cite{Mahony_book_OAMM}. The point $x \in \mathcal{Q}$ is said to be a critical point of $f$ if $\nabla_x f(x)=0$. The attitude of a rigid body is represented by a rotation matrix $R$ which belongs to the special orthogonal group $SO(3):= \{ R\in \mathbb{R}^{3\times 3} | \hspace{0.1cm}\text{det}(R)=1, R^\top R=I_3\}$. The $SO(3)$ group has a compact manifold structure and its tangent space is given by $\mathcal{T}_RSO(3):=\{R \hspace{0.1cm}\Omega \hspace{0.2cm} | \hspace{0.2cm} \Omega \in \mathfrak{so}(3)\}$ where $\mathfrak{so}(3):=\{ \Omega \in \mathbb{R}^{3\times 3} | \Omega^\top =-\Omega\}$ is the Lie algebra of the matrix Lie group $SO(3)$. The map $[.]^{\times}: \mathbb{R}^3 \rightarrow \mathfrak{so}(3)$ is defined such that $[x]^\times y=x \times y$, for any $x,y \in \mathbb{R}^3$, where $\times$ denotes the vector cross product on $\mathbb{R}^3$. The inverse map of $[.]^{\times}$ is $\text{vex}: \mathfrak{so}(3) \rightarrow \mathbb{R}^3$ such that $\text{vex}([\omega]^\times)=\omega$, and $[\text{vex}(\Omega)]^\times=\Omega$ for all $\omega \in \mathbb{R}^3$ and $\Omega \in \mathfrak{so}(3)$. Also, let $\mathbb{P}_a : \mathbb{R}^{3\times 3} \rightarrow \mathfrak{so}(3)$ be the projection map on the Lie algebra $\mathfrak{so}(3)$ such that $\mathbb{P}_a(A):=(A-A^\top )/2$. Given a 3-by-3 matrix $C:=[c_{ij}]_{i,j=1,2,3}$, one has  $\psi(C) := \text{vex} \circ \mathbb{P}_a (C)=\text{vex}(\mathbb{P}_a(C))=\frac{1}{2}[c_{32}-c_{23},c_{13}-c_{31},c_{21}-c_{12}]^\top $. For any $R\in SO(3)$, the normalized Euclidean distance on $SO(3)$, with respect to the identity $I_3$, is defined as $|R|_I^2:=\frac{1}{4}\text{tr}(I_3-R)$ $\in[0,1]$. The angle-axis parameterization of $SO(3)$, is given by $\mathcal{R}(\theta, v):=I_3+\sin\hspace{0.05cm}\theta \hspace{0.2cm}[v]^\times + (1-\cos\hspace{0.05cm}\theta)([v]^\times)^2$, where $v\in \mathbb{S}^2$ and  $\theta \in \mathbb{R}$ are the rotational axis and angle, respectively.
\subsection{Graph Theory}
Consider a network of $N$ agents. The interaction topology between the agents is described by an undirected (unweighted) graph $\mathcal G = (\mathcal V,\mathcal E)$, where $\mathcal V=\{1,...,N\}$ and $\mathcal E \subseteq \mathcal V \times \mathcal V $ represent the vertex (or agent) set and the edge set of graph $\mathcal{G}$, respectively. In undirected graphs, the edge $(i,j) \in \mathcal E$ indicates that agents $i$ and $j$ interact with each other without any restriction on the direction, which means that agent $i$ can obtain information (via communication, measurements, or both) from agent $j$ and vice versa. The \textit{adjacency} matrix $D = [d_{ij}] \in \mathbb{R}^{N \times N}$ of the graph $\mathcal{G}$ is defined such that $d_{ij} = 1$ if $(i, j) \in \mathcal{E}$ and $d_{ij} = 0$ otherwise. Self-edges are not considered, \ie, $d_{ii} = 0$. The set of neighbors of agent $i$ is defined as $\mathcal N_i = \{j \in \mathcal V : (i,j) \in \mathcal E \}$. The undirected path is a sequence of edges in an undirected graph. An undirected graph is called connected if there is an undirected path between every pair of distinct agents of the graph. An undirected graph has a cycle if there exists an undirected path that starts and ends at the same agent \cite{Ren_book}. An acyclic undirected graph is an undirected graph without a cycle. An undirected tree is an undirected graph in which any two agents are connected by exactly one path (\ie, an undirected tree is an undirected, connected, and acyclic graph). An oriented graph is obtained from an undirected graph by assigning an arbitrary direction to each edge \cite{Mesbahi_book}. Consider an oriented graph where each edge is indexed by a number. Let $M=|\mathcal{E}|$ and $\mathcal{M}=\{1,\hdots, M\}$ be the total number of edges and the set of edge indices, respectively. The \textit{incidence} matrix, denoted by $H\in\mathbb{R}^{N\times M}$, is defined as follows \cite{BAI20083170}:

{\small
\begin{equation}\label{h_matrix}
   H:=[h_{ik}]_{N\times M} \hspace{0.4cm} \text{with} \hspace{0.2cm} h_{ik}=\begin{cases}
      +1 & k\in\mathcal{M}_i^+\\
      -1 & k\in\mathcal{M}_i^-\\
      0 & \text{otherwise}
    \end{cases} \nonumber,    
\end{equation}}where $\mathcal{M}_i^+ \subset \mathcal{M}$ denotes the subset of edge indices in which agent $i$ is the head of the edges and $\mathcal{M}_i^- \subset \mathcal{M}$ denotes the subset of edge indices in which agent $i$ is the tail of the edges. For a connected graph, one verifies that $H^\top \textbf{1}_N=0$ and rank$(H)$=$N-1$. Moreover, the columns of $H$ are linearly independent if the graph is an undirected tree.

\section{Problem Statement}\label{s3}
Consider a network of $N$ agents governed by the following rigid-body rotational dynamics:
\begin{align}
       \dot{R}_i &= R_i[\omega_i]^{\times}\label{R_dynamics_i}\\
       J_i \dot{\omega}_i &= -[\omega_i]^\times J_i \omega_i + \tau_i,\label{w_dynamics_i}
\end{align}
where $R_i \in SO(3)$ represents the orientation of the body-attached frame of agent $i$ with respect to the inertial frame, $\omega_i\in \mathbb{R}^3$ is the body-frame angular velocity of agent $i$, and $\tau_i \in \mathbb{R}^3$ is the control torque to be designed. The matrix $J_i \in \mathbb{R}^{3 \times 3}$ is a constant and known inertia matrix of agent $i$. All agents are assumed to be equipped with identical inertial sensors that measure the same set of inertial unit vectors expressed in their respective body-attached frames, with at least two of these vectors being non-collinear. In addition, each agent is equipped with rate gyros providing body-frame measurements of its angular velocity. The measurements of the inertial vectors available to agent $i$ are given by
\begin{equation}\label{vector_measurement}
    b_\ell^i = R_i^\top a_\ell,
\end{equation}
where $a_\ell \in \mathbb{S}^2$, $\ell = 1,2,\ldots,n$, with $n \geq 2$, represents the inertial unit vectors in the inertial frame, which are assumed to be unknown. The agents exchange their measurements with neighboring agents according to an undirected and acyclic communication graph $\mathcal{G}$.

Based on the above model and sensing assumptions, we first consider the following leaderless attitude synchronization problems.

\begin{pb}[Kinematics]\label{pb1}
    Consider a network of $N$ agents evolving according to the rotational kinematics \eqref{R_dynamics_i}. Design a distributed feedback control law $\omega_i$ such that, for almost any initial conditions, the orientations of all agents synchronize to a common constant orientation.
\end{pb}

\begin{pb}[Dynamics]\label{pb2}
    Consider a network of $N$ agents evolving according to the rotational dynamics \eqref{R_dynamics_i}-\eqref{w_dynamics_i}. Design a distributed feedback control torque $\tau_i$ such that, for almost any initial conditions, the orientations of all agents synchronize to a common orientation.
\end{pb}

In addition to the leaderless case, we address the problem of attitude synchronization to a constant orientation specified by the designer using reference vectors. In this scenario, we assume that, in addition to the measurements in \eqref{vector_measurement}, reference unit-vector measurements are available only to a single agent, referred to as the leader. Without loss of generality, we index this agent as agent~$1$. The reference unit-vector measurements available to the leader are expressed as
\begin{equation}\label{ref_meas}
b_\ell^r = R_r^\top a_\ell,
\end{equation}
where $R_r \in SO(3)$ represents the desired constant orientation.
The corresponding leader–follower synchronization problems are formulated as follows.

\begin{pb}[Kinematics with Reference]\label{pb3}
    Consider a network of $N$ agents evolving according to the rotational kinematics \eqref{R_dynamics_i}, where only agent~$1$ has access to the reference vector measurements. Design a distributed feedback control law $\omega_i$ such that, for almost any initial conditions, the orientations of all agents converge to the desired constant orientation $R_r$.
\end{pb}

\begin{pb}[Dynamics with Reference]\label{pb4}
    Consider a network of $N$ agents evolving according to the rotational dynamics \eqref{R_dynamics_i}--\eqref{w_dynamics_i}, where only agent~$1$ has access to the reference vector measurements. Design a distributed feedback control torque $\tau_i$ such that, for almost any initial conditions, the orientations of all agents converge to the desired orientation $R_r$.
\end{pb}

The problems formulated above arise naturally in distributed estimation and coordination of orientation in networked sensing systems. In particular, the use of inertial vector measurements and inter-agent communication enables synchronization without requiring direct access to absolute orientation measurements. The leaderless problems in Problems \ref{pb1}-\ref{pb2} focus on achieving consensus on orientation using only relative information, which is essential in scenarios where no global reference frame is available. The leader-follower problems in Problems \ref{pb3}-\ref{pb4} extend this setting to cases where a desired orientation is specified through reference measurements available to a single agent. It is worth noting that restricting the reference information to one agent preserves the distributed nature of the control design and avoids reliance on centralized coordination. Moreover, the assumption of an undirected and acyclic communication graph allows information to propagate through the network while maintaining minimal connectivity requirements.

The problems addressed in this paper are motivated by several practical sensing scenarios encountered in networked systems. In the following subsection, we present representative cases where these problems arise.

\subsection{Body-Frame Position Measurements of Unknown Landmarks}
One representative example is when each agent can measure the body-frame positions of some static landmarks (unknown in the inertial frame). Such measurements naturally occur in vision-based or range-based sensing systems that track environmental features (\eg, see Figure 1). To illustrate, let $\mathrm{l}_\ell \in \mathbb{R}^3$, $\ell = 1,2,\ldots,\bar n$, denote the positions of $\bar n \geq 3$ stationary landmarks expressed in the inertial frame, and let $p_i \in \mathbb{R}^3$ denote the position of agent~$i$ in the inertial frame. Assume that agent~$i$ can measure the position $z_\ell^i$ of each landmark with respect to its body-attached frame. This position can be expressed as follows:
\begin{equation}\label{relative_position_measurement}
    z_\ell^i = R_i^\top (\mathrm{l}_\ell - p_i),
\end{equation}
To remove the dependence on the agent position, consider the relative displacement between two landmarks $\ell$ and $m$, which can be computed locally by agent~$i$ as
\begin{equation}\label{landmark_difference}
    z_{\ell m}^i := z_\ell^i - z_m^i = R_i^\top (\mathrm{l}_\ell - \mathrm{l}_m),\quad \ell \neq m.
\end{equation}
Defining the normalized inertial displacement vectors
\begin{equation}\label{inertial_vectors}
    a_{\ell m} := \frac{\mathrm{l}_\ell - \mathrm{l}_m}{\|\mathrm{l}_\ell - \mathrm{l}_m\|} \in \mathbb{S}^2,
\end{equation}
agent~$i$ can construct the corresponding body-frame unit-vector
\begin{equation}\label{vector_measurement_relative}
    b_{\ell m}^i := \frac{z_{\ell m}^i}{\|z_{\ell m}^i\|} = R_i^\top a_{\ell m}.
\end{equation}
The vectors $a_{\ell m}$ are constant, inertial, and common to all agents, while the corresponding measurements $b_{\ell m}^i$ depend only on the agent’s orientation. When at least two such vectors are non-collinear, the resulting measurements coincide with ~\eqref{vector_measurement} and provide sufficient information for attitude synchronization. Therefore, body-frame position measurements of set of landmarks naturally induce the inertial vector measurements employed in the leaderless and leader--follower attitude synchronization problems formulated in Section~\ref{s3}, without requiring agents to know the inertial vectors and their absolute/relative orientations.

\begin{figure}[h]
    \centering
\includegraphics[width=0.99\linewidth]{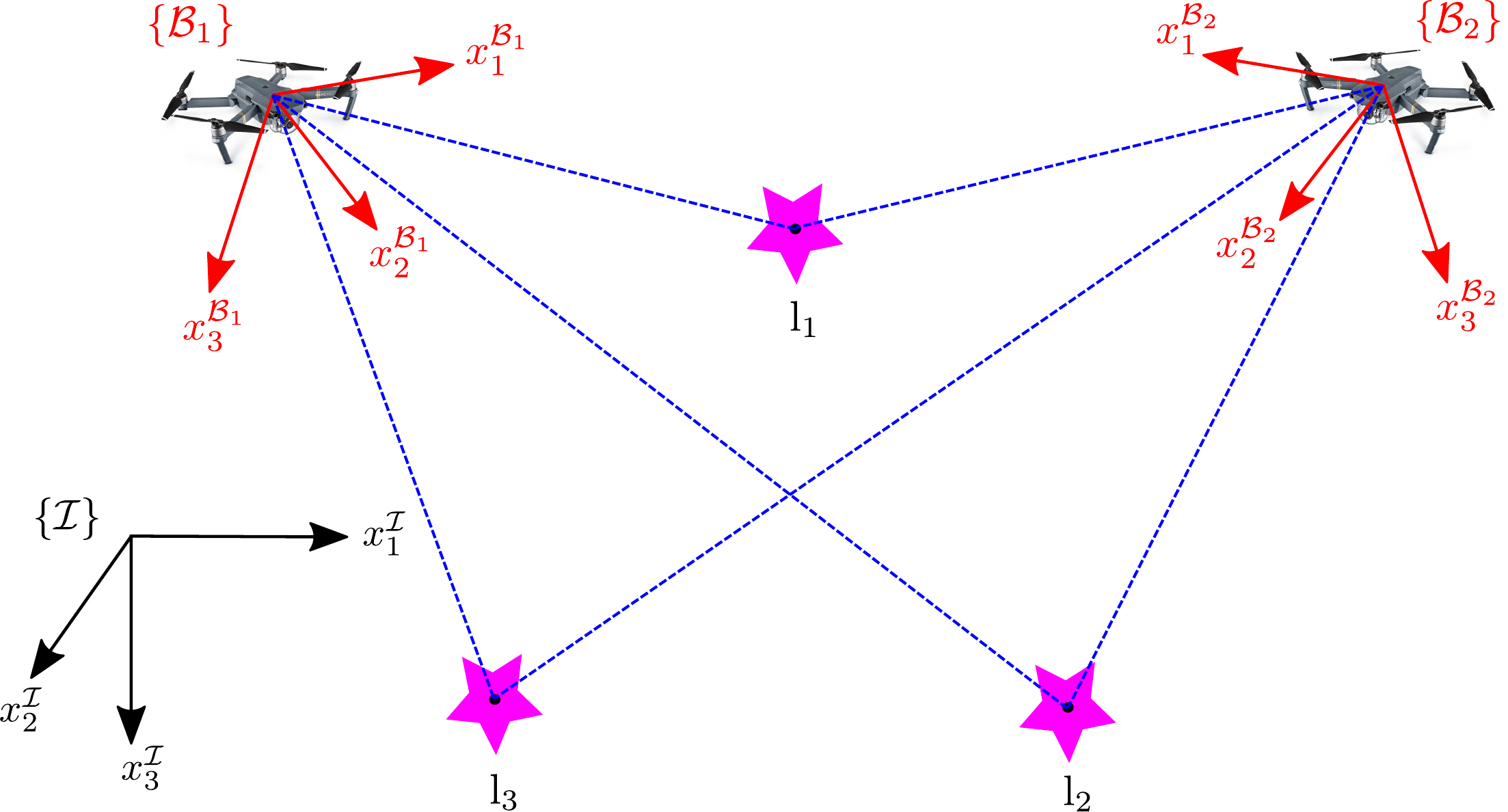}
    \caption{Example scenario where two drones observe the same set of three unknown landmarks.}
    \label{drone_measurements_diagram}
\end{figure}

\subsection{Inertial Vector Measurements from Accelerometers and Magnetometers}
Another practically important sensing scenario arises when each agent is equipped with an inertial measurement unit (IMU) consisting of accelerometers and magnetometers. Such sensors are standard on many robotic platforms and provide direct body-frame measurements of inertial vectors that can be exploited for attitude synchronization.

An accelerometer measures the apparent acceleration expressed in the body-attached frame, which is given by
\begin{equation}\label{specific_force}
    \mathrm{a}^{\mathcal{B}}_i = R_i^\top ( \ddot{p}_i - g ),
\end{equation}
where $\ddot{p}_i \in \mathbb{R}^3$ is the linear acceleration of agent~$i$ expressed in the inertial frame, and $g \in \mathbb{R}^3$ denotes the gravity vector expressed in the inertial frame. Under the common assumption of small accelerations, \ie, $\ddot{p}_i \approx 0$, the accelerometer measurement simplifies to:
\begin{equation}\label{gravity_measurement}
    \mathrm{a}^{\mathcal{B}}_i \approx - R_i^\top g.
\end{equation}
Normalizing this measurement yields the unit vector
\begin{equation}\label{gravity_unit_measurement}
    b_1^i := \frac{\mathrm{a}^{\mathcal{B}}_i}{\|\mathrm{a}^{\mathcal{B}}_i\|} = R_i^\top a_1,
\end{equation}
where $a_1 := -\frac{g}{\|g\|} \in \mathbb{S}^2$ is a constant inertial unit vector common to all agents.

Similarly, a magnetometer measures the Earth’s magnetic field expressed in the body-attached frame. Let $m \in \mathbb{R}^3$ denote the (approximately constant)  Earths magnetic field expressed in the inertial frame. The corresponding magnetometer measurement available to agent~$i$ is given by
\begin{equation}\label{magnetometer_measurement}
    \mathrm{m}^{\mathcal{B}}_i = R_i^\top m.
\end{equation}
By normalizing the magnetic field measurement, we obtain
\begin{equation}
    b_2^i := \frac{\mathrm{m}^{\mathcal{B}}_i}{\|\mathrm{m}^{\mathcal{B}}_i\|} = R_i^\top a_2,
\end{equation}
where $a_2 := \frac{m}{\|m\|} \in \mathbb{S}^2$ is a second inertial unit vector, which is assumed to be non-collinear with $a_1$.
Therefore, under mild motion assumptions and standard sensor normalization, accelerometer and magnetometer measurements naturally provide inertial vector measurements of the form
\begin{equation}
    b_\ell^i = R_i^\top a_\ell, \qquad \ell = 1,2,
\end{equation}
which coincide with the sensing model~\eqref{vector_measurement} adopted in this paper. As a result, the leaderless and leader--follower attitude synchronization problems formulated in Section~\ref{s3} directly apply to networks of agents equipped solely with onboard inertial sensors equipped with an IMU consisting of accelerometers and magnetometers, without requiring explicit attitude reconstruction or absolute orientation measurements.

\section{Leaderless Distributed Attitude Synchronization}\label{leaderless_section}
This section addresses the leaderless attitude synchronization problems formulated in Problems~\ref{pb1} and~\ref{pb2}. The objective is to design fully distributed control laws that drive the orientations of all agents to a common, but a priori unknown, orientation using only local vector measurements and inter-agent communication. No agent has access to a global reference orientation, and coordination is achieved solely through relative information exchanged over the interaction graph. We first consider the kinematic-level design, where the angular velocity is treated as the control input. The proposed feedback law operates directly on the rotation manifold and relies exclusively on vector measurements. We then extend this design to the dynamic level by incorporating the full rigid-body rotational dynamics and proposing distributed control torques that ensure synchronization under both zero and nonzero common angular velocity scenarios.

\subsection{Synchronization at the kinematic level}\label{leaderless_kinematc_design}
This section addresses \textit{Problem \ref{pb1}}, where the design of the feedback control law is performed at the kinematic level of \(SO(3)\), \ie,  treating $ \omega_i$ as the control input. For each $i \in \mathcal{V}$, we consider the following distributed feedback control law:
{
\begin{equation}\label{fc_vector_meas}
    \omega_i = \frac{k_R}{2} \sum_{j \in \mathcal{N}_i} \sum_{l=1}^n \rho_\ell \left(b_\ell^j \times b_\ell^i\right),
\end{equation}}where $k_R, \rho_\ell >0$. A similar distributed leaderless attitude synchronization scheme has been proposed in \cite{Sarlette_SIAM2012}, but it is endowed only with local stability guarantees. We assume that the gains $\rho_\ell$ are chosen such that the matrix $A:=\sum_{l=1}^n \rho_\ell a_\ell a_\ell^\top$ has three distinct eigenvalues. Note that under the assumption that at least two inertial vectors are non-collinear, it can be verified that the matrix $A$ is semi-positive definite. To address   \textit{Problem \ref{pb1}}, one should demonstrate that the feedback control law \eqref{fc_vector_meas} ensures, for every \(i \in \mathcal{V}\) and \(j \in \mathcal{N}_i\), that the relative orientation \(R_j^\top R_i\) converges to the identity matrix. Each edge in the graph has two possible relative orientations. Specifically, for every \((i, j) \in \mathcal{E}\), both \(R_j^\top R_i\) and \(R_i^\top R_j\) are defined. However, the convergence of one orientation inherently ensures the convergence of the other. To simplify the stability analysis and avoid redundancy, we consider only one relative orientation for each edge. This is accomplished by assigning an arbitrary virtual orientation to the graph \(\mathcal{G}\) and indexing each oriented edge with an integer. Consequently, for any two agents \(i\) and \(j\) connected by an oriented edge \(k\), the relative attitude is defined as $\bar{R}_k := R_j R_i^\top$ where \(\{k\} = \mathcal{M}_i^+ \cap \mathcal{M}_j^- \subset \mathcal{M}\). From \eqref{R_dynamics_i}, one can derive the following dynamics for $\bar R_k$:

{\small
\begin{equation}
     \dot{\bar R}_k = \bar R_k[\bar \omega_k]^{\times}, \label{R_bar_dynamics_k}
\end{equation}}where \(\{k\} = \mathcal{M}_i^+ \cap \mathcal{M}_j^-\) and $\bar \omega_k :=  R_i (\omega_j - \omega_i)$ for every $(i, j) \in \mathcal{E}$. Note that for each \( i \in \mathcal{V} \) and \( j \in \mathcal{N}_i \), the intersection \( \mathcal{M}_i^+ \cap \mathcal{M}_j^- \) is non-empty if and only if there exists an oriented edge \(k\) from \( i \) to \( j \). In such a case, the intersection contains exactly that edge (\ie, \( \mathcal{M}_i^+ \cap \mathcal{M}_j^- = \{k\} \)). If no such edge exists, the intersection is empty. Define $\bar \omega =[\bar \omega_1^\top , \bar \omega_2^\top , \hdots, \bar \omega_M^\top ]^\top  \in \mathbb{R}^{3M}$ and $\omega =[\omega_1^\top , \omega_2^\top , \hdots, \omega_N^\top ]^\top  \in \mathbb{R}^{3N}$. One can derive the following equation that relates $\bar \omega$ and $\omega$:

{\small
\begin{equation}\label{w_bar}
    \bar \omega = -\mathbf{H}^\top\mathbf{R}\omega,
\end{equation}}where $\mathbf{R}:=\text{diag}(R_1, R_2, \hdots, R_N) \in \mathbb{R}^{3N \times 3N}$ and $\mathbf{H}$ is defined as follows:
{\small
\begin{equation}\label{H_bar}
   \mathbf{H}(t):=[H_{ik}]_{N\times M} \hspace{0.3cm} \text{with} \hspace{0.3cm} H_{ik}=\begin{cases}
      I_3 & k\in\mathcal{M}_i^+\\
      -\bar R_k & k\in\mathcal{M}_i^-\\
      0 & \text{otherwise}
    \end{cases}.   
\end{equation}}It is important to note that the assigned orientation of the graph $\mathcal{G}$ is purely notional and does not alter the undirected nature of the interaction graph $\mathcal{G}$. Considering a single relative orientation between each pair of neighboring agents, defined by the virtual orientation assigned to the graph, attitude synchronization is achieved when $\bar{R}_k = I_3$ for each $k \in \mathcal{M}$. Next, we analyze the stability properties of the dynamics \eqref{R_dynamics_i}, considering the feedback control signal given by \eqref{fc_vector_meas}. Before proceeding, we define the set $\mathcal{A} := \left\{ x \in \mathcal{S} : \forall k \in \mathcal{M}, \; \bar{R}_k = I_3\right\}$, where $x := \left( \bar R_1, \bar R_2, \ldots, \bar R_M\right) \in \mathcal{S}$ and $\mathcal{S}:= SO(3)^M$. We are now ready to state the main result of this section.

\begin{thm}\label{theorem_1}
    Let a network of $N$ agents rotate according to the kinematics given in \eqref{R_dynamics_i}. Assume that the measurement \eqref{vector_measurement} is available and the interaction graph $\mathcal G$ is an undirected tree. Consider the dynamics \eqref{R_bar_dynamics_k} with feedback control \eqref{fc_vector_meas}. Then, the following statements hold:
    \begin{enumerate}[i)]
    \item All solutions of \eqref{R_bar_dynamics_k} with \eqref{fc_vector_meas} converge to the set of equilibria $\Upsilon : =\mathcal{A} \cup \{x \in \mathcal{S} : \bar{R}_m = I_3, \, \bar R_n=\mathcal{R}(\pi, u_{\beta_n}), \, \forall m \in \mathcal{M}^I, \, \forall n \in \mathcal{M}^\pi\}$, where $\mathcal{M}^I \cup \mathcal{M}^\pi=\mathcal{M}$, $|\mathcal{M}^\pi|>0$, $\beta_n \in \{1, 2, 3\}$, and $u_{\beta_n} \in \mathcal{E}(A)$ with $\mathcal{E}(A)\subset \mathbb{S}^2$ denotes the set of unit eigenvectors of matrix $A$.\label{set_of_equilibrium}
    \item The set of all undesired equilibrium points $\Upsilon \setminus \mathcal{A}$ is unstable.\label{unstability_of_equilibrium}
    \item The desired equilibrium set $\mathcal{A}$ is \textit{almost globally asymptotically stable}\footnote{ The set $\mathcal{A}$ is said to be almost globally asymptotically stable if it is asymptotically stable, and attaractive from all initial conditions except a set of zero Lebesgue measure.}. \label{stability_of_equilibrium}
    \end{enumerate}
\end{thm}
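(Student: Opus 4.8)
The plan is to recognize the closed loop as a gradient-like flow of a single smooth Lyapunov function on the compact manifold $\mathcal{S}$, and then run the standard three-step program: LaSalle for global convergence to the equilibria, linearization for instability of the undesired ones, and a stable-manifold (measure-zero) argument for the almost global claim.

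First I would take the potential
\[
V(x) = \sum_{k\in\mathcal{M}}\text{tr}\big(A(I_3-\bar R_k)\big),
\]
which is nonnegative on $\mathcal{S}$ and vanishes precisely on $\mathcal{A}$. The key preliminary step is to rewrite the feedback \eqref{fc_vector_meas} intrinsically. Using $(Ru)\times(Rv)=R(u\times v)$ together with $x\times y=\text{vex}(yx^\top-xy^\top)$, one checks that each neighbor term collapses to $\sum_l\rho_l(b_l^j\times b_l^i)=2R_i^\top\psi(A\bar R_k)$; handling both edge orientations via the identity $\psi(AR^\top)=-R\,\psi(AR)$ then yields the compact form $\omega=k_R\,\mathbf{R}^\top\mathbf{H}\Psi$, where $\Psi:=[\psi(A\bar R_1)^\top,\ldots,\psi(A\bar R_M)^\top]^\top$. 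Substituting into \eqref{w_bar} and using $\mathbf{R}\mathbf{R}^\top=I_{3N}$ gives the reduced dynamics $\bar\omega=-k_R\,\mathbf{H}^\top\mathbf{H}\,\Psi$. Since $\mathcal G$ is a tree, $\mathbf{H}$ has full column rank (prune leaves and use that each $\bar R_k$ is invertible), hence $\mathbf{H}^\top\mathbf{H}\succ0$, and differentiating along trajectories gives $\dot V=2\Psi^\top\bar\omega=-\tfrac{2}{k_R}\|\omega\|^2\le0$.

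For statement \ref{set_of_equilibrium}, compactness of $\mathcal{S}$ and $\dot V\le0$ let me invoke LaSalle: every solution converges to the largest invariant set in $\{\dot V=0\}=\{\omega=0\}$. There $\omega=0$ forces $\mathbf{H}\Psi=0$, hence $\Psi=0$ by full column rank, i.e. $\psi(A\bar R_k)=0$ for every $k$ independently. These are exactly the critical points of $R\mapsto\text{tr}(A(I_3-R))$, which — because $A$ has three distinct eigenvalues — are $I_3$ and the three $\pi$-rotations $\mathcal{R}(\pi,u_\beta)$ about the unit eigenvectors $u_\beta\in\mathcal{E}(A)$. Taking the product over edges reproduces $\Upsilon$, and since $\omega=0$ makes each $\dot{\bar R}_k=0$, these are genuine equilibria. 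For statement \ref{unstability_of_equilibrium} I would linearize about an undesired equilibrium: writing $\bar R_k=\bar R_k^\star\exp([\epsilon_k]^\times)$ and using the gradient-like form, the linearization is $\dot\epsilon=-k_R(\mathbf{H}^\top\mathbf{H})|_\star\,\mathrm{Hess}\,\Phi|_\star\,\epsilon$ with $\Phi=\sum_k\text{tr}(A(I_3-\bar R_k))$. Because the tree structure makes the $\bar R_k$ independent coordinates, $\mathrm{Hess}\,\Phi$ is block-diagonal across edges, so it suffices to show that a single block $\mathrm{Hess}\,\text{tr}(A(I_3-\mathcal{R}(\pi,u_\beta)))$ is indefinite; since $\mathbf{H}^\top\mathbf{H}\succ0$, the product is similar to a symmetric matrix with the same inertia as $\mathrm{Hess}\,\Phi|_\star$ (Sylvester's law), so the linearization has an eigenvalue with positive real part and Lyapunov's indirect method yields instability of $\Upsilon\setminus\mathcal{A}$.

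Finally, for statement \ref{stability_of_equilibrium}, the same Hessian computation shows that at $\mathcal{A}$ the potential $V$ has a non-degenerate minimum, so $\mathcal{A}$ is locally asymptotically stable by the direct method. Distinctness of the eigenvalues of $A$ renders every equilibrium in $\Upsilon$ hyperbolic, so by statement \ref{unstability_of_equilibrium} each undesired equilibrium has a stable manifold of positive codimension, hence of zero Lebesgue measure. Combining the LaSalle conclusion of \ref{set_of_equilibrium} — which excludes any limit set other than points of $\Upsilon$ — with the fact that only the measure-zero union of these stable manifolds fails to reach $\mathcal{A}$ establishes almost global asymptotic stability. I expect the main obstacle to be statement \ref{unstability_of_equilibrium}: correctly evaluating the sign of the Hessian block at each $\pi$-rotation and propagating that single-edge indefiniteness through the coupling $\mathbf{H}^\top\mathbf{H}$ into a genuine unstable eigenvalue of the full linearization, and then certifying hyperbolicity so that the stable-manifold measure-zero argument underpinning \ref{stability_of_equilibrium} is rigorous.
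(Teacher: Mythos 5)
Your proposal is correct and reaches all three items, but it takes a genuinely different route at the instability step, which is also where the paper does its real work. For item (i) the two arguments essentially coincide: same potential $V(x)=\sum_{k}\text{tr}(A(I_3-\bar R_k))$, same compact rewriting $\omega=k_R\mathbf{R}^\top\mathbf{H}\Psi$, LaSalle, and the implication $\mathbf{H}\Psi=0\Rightarrow\Psi=0$ --- except that where the paper outsources this implication to a cited lemma and the characterization of the critical points of $\text{tr}(A(I_3-R))$ to Mayhew et al., you prove the former directly by leaf-pruning on the tree (using invertibility of the $\bar R_k$ blocks) and state the latter explicitly; both substitutions are sound. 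For items (ii)--(iii) the paper computes the block-diagonal Hessian $\mathbf{A}^*=\text{diag}(A_1^*,\dots,A_M^*)$, classifies the undesired critical points as maxima or saddles, and then proves instability with the Chetaev function $V^*(x)=2\sum_{n\in\mathcal{M}^\pi}(\lambda_{p_n}+\lambda_{d_n})-V(x)$, using $\dot V^*=-\dot V>0$ on a suitable set; you instead linearize the closed loop at each undesired equilibrium, obtaining $\dot\epsilon=-\tfrac{k_R}{2}\,\mathbf{H}^{\star\top}\mathbf{H}^{\star}\mathbf{A}^*\,\epsilon$, and extract an unstable eigenvalue via Sylvester's law of inertia (the product of a symmetric positive definite matrix with a symmetric matrix having a negative eigenvalue is similar to a symmetric matrix of the same inertia). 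Your spectral route actually buys something the paper's argument glosses over: it certifies \emph{hyperbolicity} of the undesired equilibria (all eigenvalues of the linearization real and nonzero), which is precisely the hypothesis under which the stable manifold theorem yields a stable manifold of positive codimension, hence the zero-Lebesgue-measure conclusion in item (iii); Chetaev instability alone does not deliver that codimension statement. In exchange, the paper's Chetaev argument needs no spectral analysis of the linearized flow and works entirely with the potential.

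Three minor repairs would make your argument fully rigorous. First, $\mathbf{H}(x)$ is state-dependent through its $-\bar R_k$ blocks, so the Jacobian you wrote is only valid because the terms arising from differentiating $\mathbf{H}$ are multiplied by $\Psi$, which vanishes at every equilibrium of $\Upsilon$; say this explicitly. Second, ``indefinite'' is not the right word for every Hessian block: at $\bar R_n=\mathcal{R}(\pi,u_{\beta_n})$ with $u_{\beta_n}$ the eigenvector of the smallest eigenvalue of $A$, the block is negative \emph{definite} (eigenvalues $\lambda_3-\lambda_2$, $\lambda_3-\lambda_1$, $-\lambda_1-\lambda_2$ when $\lambda_1>\lambda_2>\lambda_3$); what your argument needs, and what holds in all cases, is merely one strictly negative eigenvalue. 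Third, hyperbolicity requires not only distinctness of the eigenvalues of $A$ but also $\text{rank}(A)\geq 2$ (\ie, $\lambda_2>0$, so that, \eg, $\lambda_2+\lambda_3\neq0$ and $-\lambda_2-\lambda_3\neq0$); this is guaranteed by the standing assumption that at least two of the inertial vectors are non-collinear, but it should be invoked where you claim no eigenvalue of the linearization vanishes.
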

\begin{proof}
    See Appendix \ref{app_1}
\end{proof}

Theorem~\ref{theorem_1} shows that the set of desired equilibria $\mathcal{A}$ is almost globally asymptotically stable for system~\eqref{R_bar_dynamics_k} under the control law~\eqref{fc_vector_meas}. Unfortunately, as discussed in~\cite{Koditschek1_989}, no smooth feedback control law can achieve global asymptotic stabilization on $SO(3)$ due to topological obstructions. Consequently, the presence of unstable equilibrium points in the set $\Upsilon \setminus \mathcal{A}$ is unavoidable when using smooth vector fields.

\subsection{Synchronization at the dynamic level}
To address \textit{Problem~\ref{pb2}}, we extend the kinematic-level synchronization strategy presented in the previous section to the full rigid-body rotational dynamics on \(SO(3)\). At this level, the angular velocity is no longer treated as a virtual control input, and the control design must explicitly account for inertial effects and nonlinear Coriolis terms.
Depending on the desired steady-state behavior, two synchronization objectives are considered. The first aims at achieving attitude synchronization to a constant orientation by driving all angular velocities to zero. The second allows convergence to a common nonzero angular velocity, resulting in synchronization to a time-varying orientation. To accommodate both cases, we propose the following distributed feedback control torque for each agent \(i \in \mathcal{V}\):
\begin{align}\label{torque_i}
    \tau_i &= [\omega_i]^\times J_i \omega_i+\frac{k_R}{2} \sum_{j \in \mathcal{N}_i} \sum_{l=1}^n \rho_\ell \left(b_\ell^j \times b_\ell^i\right)-k_\omega \omega_i\nonumber\\
    &~~~~~~~~~~~~~~~~~~~~~~-\bar k_\omega \sum_{j \in \mathcal{N}_i}(\omega_i-\omega_j),
\end{align}
where $k_\omega$ and $\bar{k}_\omega$ are non-negative scalars. 
The first term in the control torque \eqref{torque_i} compensates for the nonlinear Coriolis effects in the dynamics of $\omega_i$, while the remaining three terms drive the agents' orientations and angular velocities toward common values. Figure \ref{Controller_structure_less} illustrates the
structure of the proposed distributed feedback control law \eqref{torque_i}. As we will discuss later, choosing $k_\omega > 0$ and $\bar{k}_\omega \geq 0$ introduces the necessary damping to drive the agents' angular velocities to zero, leading to synchronization at a constant orientation. Setting $k_\omega = 0$ and $\bar{k}_\omega > 0$ results in damping that ensures convergence to a nonzero common angular velocity, leading to synchronization at a time-varying orientation.

Define the extended state $\bar x := (x,\omega) \in \bar{\mathcal{S}}$, where 
$\bar{\mathcal{S}} := SO(3)^M \times \mathbb{R}^{3N}$. In the following theorem we will establish the stability properties of the dynamics \eqref{R_bar_dynamics_k} and \eqref{w_dynamics_i} under the distributed feedback torque \eqref{torque_i}, for the case $k_\omega>0$ and $\bar{k}_\omega \geq 0$.  

\begin{thm}\label{theorem_2}
    Let a network of $N$ agents rotate according to the dynamics given in \eqref{R_dynamics_i}-\eqref{w_dynamics_i}. Assume that the measurement \eqref{vector_measurement} is available and the interaction graph $\mathcal G$ is an undirected tree. Consider the dynamics \eqref{w_dynamics_i} and \eqref{R_bar_dynamics_k} under the control torque \eqref{torque_i} with $k_R>0$, $k_\omega>0$, and $\bar k_\omega \geq 0$. Then, the following statements hold:
    \begin{enumerate}[i)]
    \item All solutions of \eqref{w_dynamics_i} and \eqref{R_bar_dynamics_k} with \eqref{torque_i} converge to the set of equilibria $\bar{\Upsilon}_0 :=\bar{\mathcal{A}}_0 \cup \{x \in \mathcal{S} : \omega=0, \, \bar{R}_m = I_3, \, \bar R_n=\mathcal{R}(\pi, u_{\beta_n}), \, \forall m \in \mathcal{M}^I, \, \forall n \in \mathcal{M}^\pi\}$, where $\bar{\mathcal{A}}_0:=\{\bar x \in \bar{\mathcal{S}}: x \in \mathcal{A}, ~\omega=0\}$. \label{dyn_set_of_equilibrium}
    \item The set of all undesired equilibrium points $\bar \Upsilon_0 \setminus \bar{\mathcal{A}}_0$ is unstable.\label{dyn_unstability_of_equilibrium}
    \item The desired equilibrium set $\bar{\mathcal{A}}_0$ is \textit{almost globally asymptotically stable}. \label{dyn_stability_of_equilibrium}
    \end{enumerate}
\end{thm}
\begin{proof}
    See Appendix \ref{app_2}
\end{proof}

Theorem~\ref{theorem_2} establishes that the desired equilibrium set $\bar{\mathcal{A}}_0$ is almost globally asymptotically stable for the closed-loop dynamics \eqref{w_dynamics_i} and \eqref{R_bar_dynamics_k} under the distributed control torque~\eqref{torque_i}. As in the kinematic-level case, this result is fundamentally constrained by the topology of the rotation manifold $SO(3)$, and global asymptotic stabilization of rotations cannot be achieved using smooth feedback laws, even when the design is performed at the dynamic level \cite{Koditschek1_989}. Consequently, the existence of unstable equilibrium configurations, represented by the set $\bar{\Upsilon}_0 \setminus \bar{\mathcal{A}}_0$, is unavoidable.
Unlike Theorem~\ref{theorem_2}, the next result addresses the case where \(k_\omega = 0\) and \(\bar{k}_\omega > 0\), for which the closed-loop system synchronizes to a common angular velocity that is not necessarily zero.

\begin{pro}\label{theorem_3}
    Let a network of $N$ agents rotate according to the dynamics given in \eqref{R_dynamics_i}-\eqref{w_dynamics_i}. Assume that the measurement \eqref{vector_measurement} is available and the interaction graph $\mathcal G$ is an undirected tree. Consider the dynamics \eqref{w_dynamics_i} and \eqref{R_bar_dynamics_k} under the control torque \eqref{torque_i} with $k_R>0$, $k_\omega=0$, and $\bar k_\omega > 0$. Then, the following statements hold:
    \begin{enumerate}[i)]
    \item All solutions of \eqref{w_dynamics_i} and \eqref{R_bar_dynamics_k} with \eqref{torque_i} are bounded and converge to the set of equilibria $\bar{\Upsilon}_c :=\bar{\mathcal{A}}_c \cup \{x \in \mathcal{S} : \omega=\textbf{1}_N \otimes \Omega, \, \bar{R}_m = I_3, \, \bar R_n=\mathcal{R}(\pi, u_{\beta_n}), \, \forall m \in \mathcal{M}^I, \, \forall n \in \mathcal{M}^\pi\}$, where $\bar{\mathcal{A}}_c:=\{\bar x \in \bar{\mathcal{S}}: x \in \mathcal{A}, ~\omega=\textbf{1}_N \otimes \Omega\}$ and $\Omega$ is a vector in $\mathbb{R}^3$. \label{dyn_c_set_of_equilibrium}
    \item There exists a positive constant $c > 0$ such that, for any initial condition satisfies
    \begin{equation}
        k_R \sum_{k=1}^{M} \text{tr}\left(A\left(I_3-\bar{R}_k(0)\right)\right)+\sum_{i=1}^{N} \omega_i(0)^\top J_i\, \omega_i(0) < c,\label{initial_cond}
    \end{equation}
    the trajectories of \eqref{w_dynamics_i} and \eqref{R_bar_dynamics_k} under the control law \eqref{torque_i} are bounded and converge to the desired set $\bar{\mathcal{A}}_c$.\label{converge_item}
    \end{enumerate}
\end{pro}
\begin{proof}
    See Appendix \ref{app_3}
\end{proof}

In contrast to Theorem \ref{theorem_2}, which guarantees almost global asymptotic stability for synchronization to a constant orientation, Proposition \ref{theorem_3} establishes only local convergence to a synchronized attitude motion characterized by a common nonzero angular velocity. This limitation is the price of achieving synchronization along a time-varying orientation with the proposed design. 

\begin{rmk}
The distinction between Theorem~\ref{theorem_2} and Proposition~\ref{theorem_3}, despite their differing stability properties, illustrates the flexibility and modularity of the proposed distributed control framework. By appropriately selecting the damping gains $k_\omega$ and $\bar{k}_\omega$, the same control structure can accommodate different synchronization objectives. Specifically, when $k_\omega > 0$, the injected damping drives all angular velocities to zero, achieving synchronization to a constant orientation. Conversely, when $k_\omega = 0$ and $\bar{k}_\omega > 0$, the control law enforces agreement among the agents’ angular velocities without eliminating their common motion, resulting in synchronization along a common time-varying trajectory. Notably, both behaviors are realized in a fully distributed manner, relying solely on local measurements and inter-agent communication, without requiring a leader or a global reference frame. This gain-dependent transition between constant and time-varying attitude synchronization highlights the versatility of the proposed design and allows the control objective to be adjusted to the application without modifying the underlying control architecture.
\end{rmk}

\section{Leader-Follower Distributed Attitude Synchronization}\label{leader-follower_section}
In this section, we extend the leaderless synchronization framework to address the leader–follower attitude synchronization problems formulated in Problems~\ref{pb3} and~\ref{pb4}. Unlike the previous section, the objective here is to synchronize all agents to a prescribed constant orientation specified through reference vector measurements available to a single agent, referred to as the leader. The remaining agents do not have direct access to the reference information and must rely on local communication to achieve alignment. We propose distributed control laws that preserve the decentralized structure of the network while allowing the reference information to propagate through the interaction graph. As in the leaderless case, we first present a kinematic-level design and establish its almost global stability properties. The results are then extended to the dynamic level by designing distributed control torque that guarantees synchronization under the full rigid-body rotational dynamics.

\subsection{Leader-follower synchronization at the kinematic level}
Following the design philosophy introduced in Section \ref{leaderless_kinematc_design}, we propose a distributed control law for the angular velocity $\omega_i$ of each agent to achieve the objective stated in Problem \ref{pb3}. For each $i \in \mathcal{V}$, the angular velocity dynamics are chosen as
\begin{equation}\label{w_leader} 
\omega_i =
\frac{k_R}{2} \sum_{j \in \mathcal{N}_i} \sum_{\ell=1}^n \rho_\ell \left(b_\ell^j \times b_\ell^i\right)
+\frac{k_i}{2}\sum_{\ell=1}^n \bar{\rho}_\ell \left(b_\ell^r \times b_\ell^i\right),
\end{equation}
where $\bar{\rho}_\ell>0$, $k_1>0$, and $k_i=0$ for all $i \in \mathcal{V}\setminus\{1\}$. The gains $\bar{\rho}_\ell$ are selected such that the matrix $ \bar A := \sum_{l=1}^n \bar{\rho}_\ell a_\ell a_\ell^\top$ has three distinct eigenvalues. The control law \eqref{w_leader} consists of two components. The first term coincides with the leaderless synchronization law and enforces agreement among neighboring agents. The second term is active only for the leader and injects the reference information into the network through reference unit-vector measurements. 
The objective of Problem~\ref{pb3} is achieved if the relative orientations $R_j^\top R_i$ and $R_r^\top R_1$ converge to the identity matrix for all $i \in \mathcal{V}$ and $j \in \mathcal{N}_i$. In analogy with the relative attitude errors defined between neighboring agents, we define the attitude error between the reference orientation and the leader's orientation as $\tilde{R}_1 := R_r R_1^\top$. Using the kinematics~\eqref{R_dynamics_i}, the dynamics of $\tilde{R}_1$ evolve according to
\begin{equation}\label{R1_tilde}
    \dot{\tilde{R}}_1 = -\tilde{R}_1 [R_1 \omega_1]^\times,
\end{equation}
where $i \in \mathcal{V}$. Now, let $\mathcal{A}^r :=
\left\{
x^r \in \mathcal{S}^r \,:\,
\bar{R}_k = I_3,\ \forall k \in \mathcal{M},\;
\tilde{R}_1 = I_3
\right\}$,
where $x^r := (\bar R_1,\bar R_2,\ldots,\bar R_M,\tilde{R}_1)$ and $\mathcal{S}^r := SO(3)^{M+1}$. The following theorem characterizes the stability properties of the closed-loop system \eqref{R_bar_dynamics_k} and \eqref{R1_tilde} with feedback control \eqref{w_leader}.

\begin{thm}\label{theorem_4}
    Let a network of $N$ agents evolve according to the kinematics \eqref{R_dynamics_i}. Suppose that the vector measurements \eqref{vector_measurement} are available to all agents, the reference unit-vector measurements \eqref{ref_meas} are available to agent $1$, and the interaction graph $\mathcal G$ is an undirected tree. Consider the closed-loop dynamics \eqref{R_bar_dynamics_k} and \eqref{R1_tilde} under the feedback control law \eqref{w_leader}. Then, the following statements hold:
    \begin{enumerate}[i)]
    \item All solutions of \eqref{R_bar_dynamics_k} and \eqref{R1_tilde}, under the control law \eqref{w_leader}, converge to the equilibrium set
    \begin{align}
        \Upsilon^r := \mathcal{A}^r \,\cup\, &\Bigl\{ x^r \in \mathcal{S}^r \,\big|\, 
        \tilde{R}_1 \in \{ I_3, \mathcal{R}(\pi,\bar{u}) \},\;
        \bar{R}_m = I_3,\;\nonumber\\
        &\bar{R}_n = \mathcal{R}(\pi, u_{\beta_n}),\;
        \forall m \in \mathcal{M}^I,\;
        \forall n \in \mathcal{M}^\pi
        \Bigr\},\nonumber
    \end{align}
    where the index sets satisfy $\mathcal{M}^I \cup \mathcal{M}^\pi = \mathcal{M}$, with either $|\mathcal{M}^\pi| > 0$, or $|\mathcal{M}^\pi| = 0$ and $\tilde{R}_1 = \mathcal{R}(\pi, \bar{u})$. Moreover, $ \forall \beta_n \in \{1,2,3\}$, $u_{\beta_n} \in \mathcal{E}(A)$, and $\bar{u} \in \mathcal{E}(\bar{A})$, where $\mathcal{E}(A) \subset \mathbb{S}^2$ and $\mathcal{E}(\bar{A}) \subset \mathbb{S}^2$ denote the sets of unit eigenvectors of the matrices $A$ and $\bar{A}$, respectively. \label{set_of_equilibrium_leader}
    \item The set of all undesired equilibrium points $\Upsilon^r \setminus \mathcal{A}^r$ is unstable.\label{unstability_of_equilibrium_leader}
    \item The desired equilibrium set $\mathcal{A}^r$ is \textit{almost globally asymptotically stable}. \label{stability_of_equilibrium_leader}
    \end{enumerate}
\end{thm}
\begin{proof}
    See Appendix \ref{app_4}
\end{proof}

Theorem~\ref{theorem_4} establishes that the desired equilibrium set $\mathcal{A}^r$ is almost globally asymptotically stable under the distributed leader--follower control law \eqref{w_leader}. As in the leaderless case, global asymptotic stabilization on $SO(3)$ cannot be achieved using smooth feedback due to inherent topological obstructions \cite{Koditschek1_989}. 
Moreover, although the reference unit-vector measurements are available only to the leader, the connectivity of the interaction graph ensures that the reference information propagates to all followers through local relative measurements, resulting in global alignment without centralized coordination or explicit dissemination of a global reference.

\subsection{Leader-follower synchronization at the dynamic level}
In this subsection, we extend the leader-follower attitude synchronization framework to the full rigid-body attitude dynamics.
Building upon the kinematic-level design developed in the previous section, our objective is to achieve attitude synchronization to a prescribed reference orientation in the presence of rotational dynamics, inertial effects, and angular velocity coupling. To this end, we design a distributed feedback control torque that preserves the leader-follower structure, injects the reference information through the leader only, and guarantees almost global asymptotic stability. For each $i \in \mathcal{V}$, we propose the following distributed feedback control torque:
\begin{align}\label{torque_i_leader}
    \tau_i &= \frac{k_R}{2} \sum_{j \in \mathcal{N}_i} \sum_{\ell=1}^n \rho_\ell \left(b_\ell^j \times b_\ell^i\right)
    +\frac{k_i}{2}\sum_{\ell=1}^n \bar{\rho}_\ell \left(b_\ell^r \times b_\ell^i\right)\nonumber\\
    &\quad +[\omega_i]^\times J_i \omega_i
    -k_\omega \omega_i
    -\bar k_\omega \sum_{j \in \mathcal{N}_i}(\omega_i-\omega_j),
\end{align}
where $\bar{\rho}_\ell>0$, $k_i>0$ for $i=1$, and $k_i=0$ for all $i\in\mathcal{V}\setminus\{1\}$. Figure \ref{Controller_structure_1} illustrates the
structure of the proposed distributed feedback control law \eqref{torque_i_leader}. 
Note that, as in the control law \eqref{w_leader}, the second term in \eqref{torque_i_leader} is active only for the leader and injects the reference information into the network through the reference vector measurements, whereas the first and last terms are fully distributed. Define the set $\bar{\mathcal{A}}_0^r :=\left\{\bar x^r \in \bar{\mathcal{S}}^r : \forall k \in \mathcal{M},\; \bar{R}_k = I_3,\; \tilde{R}_1=I_3,\; \omega=0 \right\}$, where $\bar x^r := \left( \bar R_1, \bar R_2, \ldots, \bar R_M, \tilde{R}_1, \omega\right) \in \bar{\mathcal{S}}^r$
and $\bar{\mathcal{S}}^r := SO(3)^{M+1}\times \mathbb{R}^{3N}$. The following theorem establishes the equilibrium characterization and almost global asymptotic stability of the closed-loop system \eqref{w_dynamics_i}, \eqref{R_bar_dynamics_k}, and \eqref{R1_tilde} under the proposed control torque \eqref{torque_i_leader}.
\begin{thm}\label{theorem_5}
    Consider a network of $N$ agents, each evolving according to the rotational dynamics defined in \eqref{R_dynamics_i}-\eqref{w_dynamics_i}. Suppose the vector measurements \eqref{vector_measurement} are available to all agents, while only agent $1$ has access to the reference unit-vector measurements \eqref{ref_meas}. The interaction graph $\mathcal G$ is assumed to be an undirected tree. Under the closed-loop dynamics given by \eqref{w_dynamics_i}, \eqref{R_bar_dynamics_k}, and \eqref{R1_tilde}, and the control torque described in \eqref{torque_i_leader} with gains satisfying $k_R > 0$, $k_\omega > 0$, and $\bar k_\omega \geq 0$, the following statements hold:
    \begin{enumerate}[i)]
    \item All solutions of \eqref{w_dynamics_i}, \eqref{R_bar_dynamics_k} and \eqref{R1_tilde} with \eqref{torque_i_leader} converge to the set of equilibria $\bar{\Upsilon}^r_0 :=\bar{\mathcal{A}}^r_0 \cup \{\bar x^r \in \bar{\mathcal{S}}^r : \omega=0,\, 
        \tilde{R}_1 \in \{ I_3, \mathcal{R}(\pi,\bar{u}) \},\; \, \bar{R}_m = I_3, \, \bar R_n=\mathcal{R}(\pi, u_{\beta_n}), \, \forall m \in \mathcal{M}^I, \, \forall n \in \mathcal{M}^\pi\}$.\label{dyn_set_of_equilibrium_leader}
    \item The set of all undesired equilibrium points $\bar \Upsilon^r_0 \setminus \bar{\mathcal{A}}^r_0$ is unstable.\label{dyn_unstability_of_equilibrium_leader}
    \item The desired equilibrium set $\bar{\mathcal{A}}^r_0$ is \textit{almost globally asymptotically stable}. \label{dyn_stability_of_equilibrium_leader}
    \end{enumerate}
\end{thm}
\begin{proof}
    See Appendix \ref{app_5}
\end{proof}

\begin{rmk}
It is important to note that, due to the structure of the control laws \eqref{w_leader} and \eqref{torque_i_leader}, which include terms that drive the agents' orientations towards consensus, and because the leader's orientation converges to the reference orientation $R_r$, the orientations of all other agents will also converge to $R_r$. This convergence occurs because the interaction graph is connected, ensuring that each agent's orientation is coupled to the leader's through local communication. As a result, the leader's orientation serves as the global reference, driving all agents' attitudes to align with $R_r$ as the leader's attitude converges to the reference orientation.
\end{rmk}

\begin{figure*}[t]
    \centering
    \subfigure[]{\includegraphics[width=0.48\textwidth]{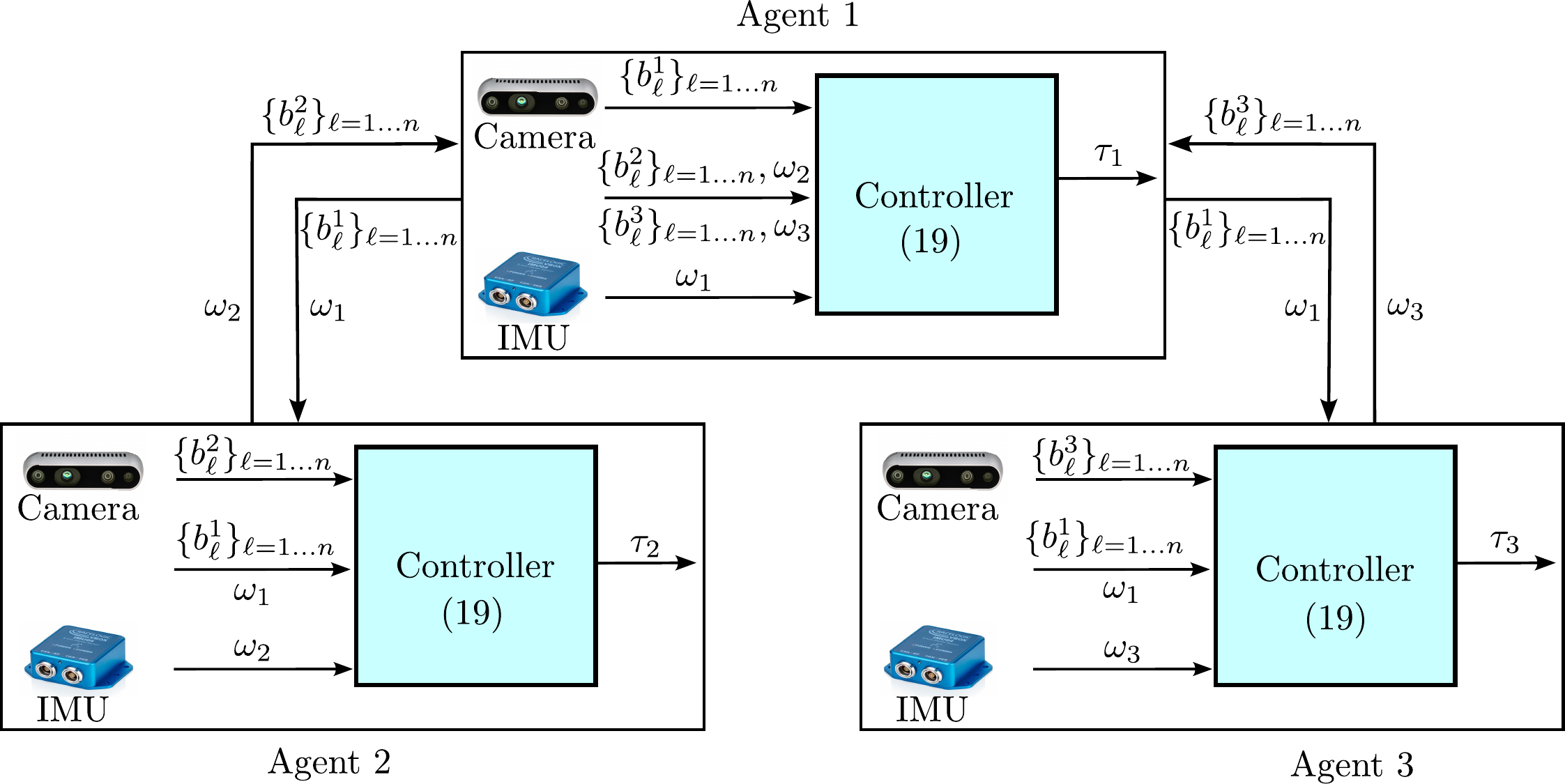}\label{Controller_structure_less}} 
    \hfill
    \subfigure[]{\includegraphics[width=0.48\textwidth]{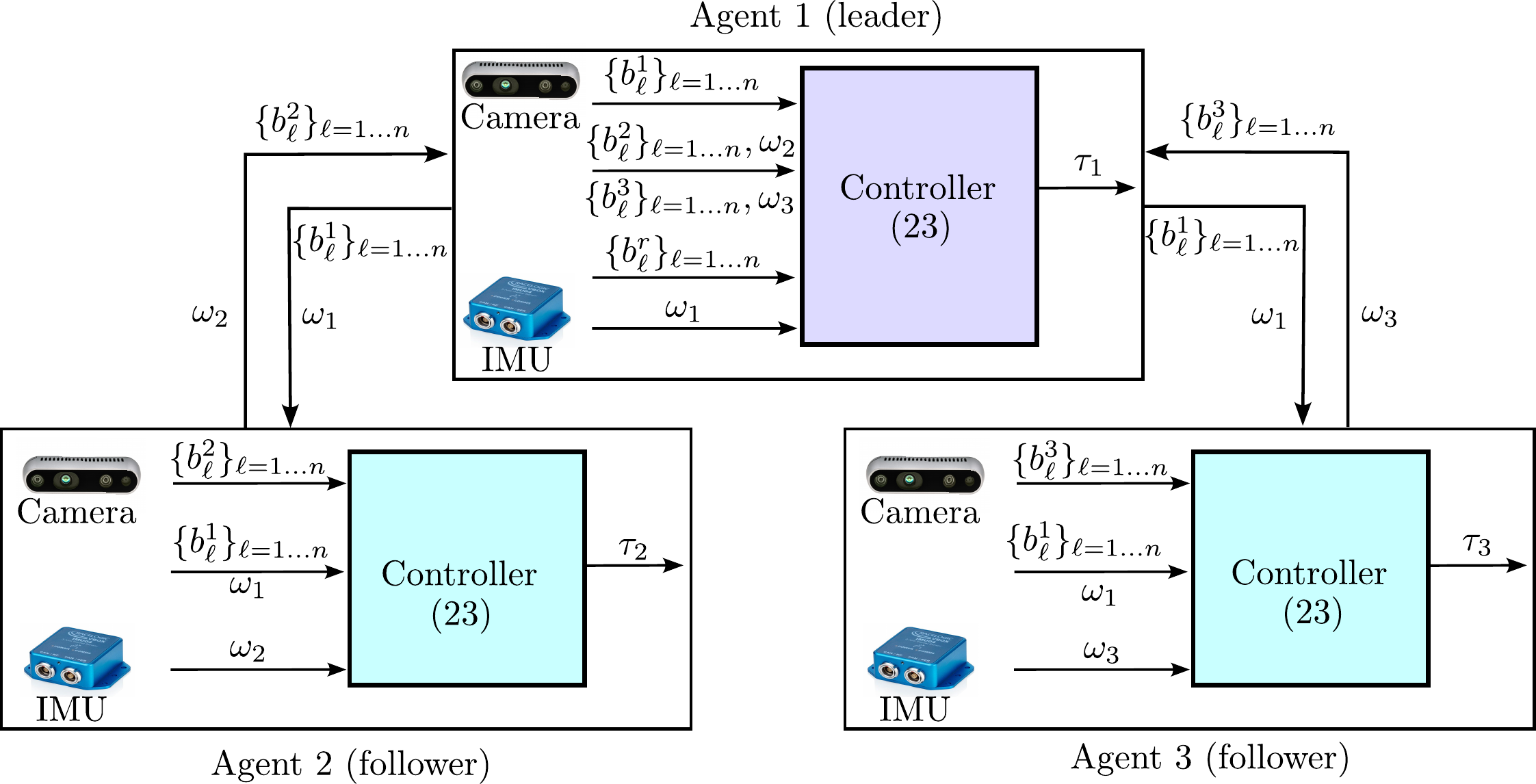}\label{Controller_structure_1}}
    \caption{Proposed control schemes at the dynamic level for a three-agent network: (a) Leaderless scheme \eqref{torque_i} (b) Leader-follower scheme \eqref{torque_i_leader}.}
    \label{Controller_structure}
\end{figure*}

\section{SIMULATION}\label{s6}
This section evaluates the performance of the proposed distributed synchronization schemes \eqref{fc_vector_meas} and \eqref{torque_i} through numerical simulations. Along with the leaderless synchronization approach, the leader–follower schemes \eqref{w_leader} and \eqref{torque_i_leader} are also examined\footnote{Simulation videos can be found at  
\href{https://youtu.be/mpx20gwkoY4}{\textcolor{Rhodamine}{https://youtu.be/mpx20gwkoY4}}.}. Consider a network of eight rigid-boy systems, \ie, $N=8$, interacting according to the undirected graph topology depicted in Fig.~\ref{graph}. 
\begin{figure}[H]
    \centering
    \includegraphics[width=0.99\linewidth]{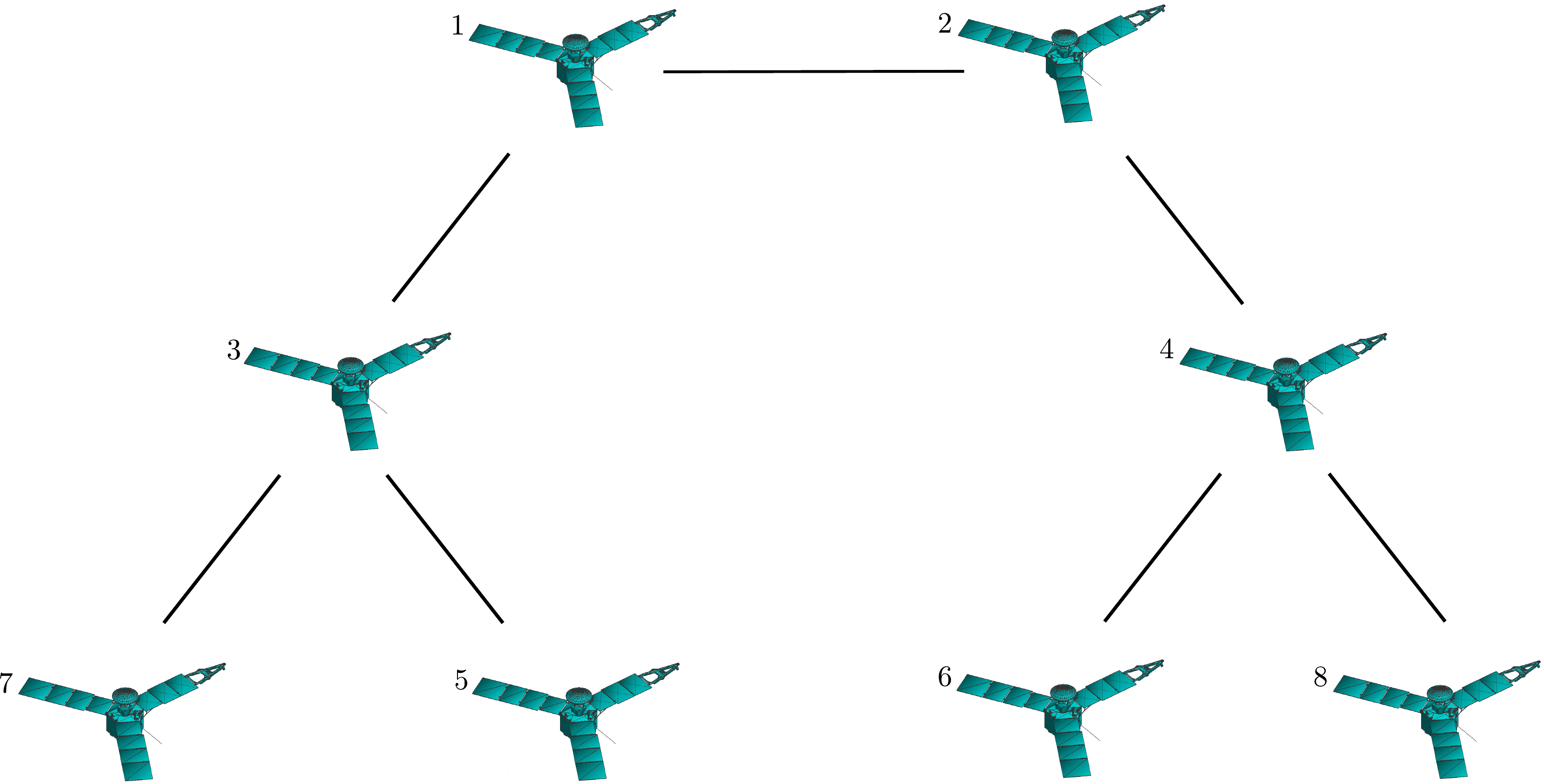}
    \caption{Interaction graph for a network of eight rigid-boy systems.}
    \label{graph}
\end{figure}
Without loss of generality, all satellites are assumed to have identical inertia matrices given by $J = \text{diag}(0.0159,\, 0.015,\, 0.0297)$. For Leaderless synchronization, three simulations are conducted. The first evaluates the kinematic control law \eqref{fc_vector_meas}. The second and third simulations assess the dynamic controller \eqref{torque_i} under two gain configurations: 
$(i)\; k_\omega > 0,\; \bar{k}_\omega \geq 0,$ and $(ii)\; k_\omega = 0,\; \bar{k}_\omega > 0$. For all simulations, the initial attitudes are chosen as rotations about the axis $u = [1~0~0]^\top$, specified as follows: $R_1(0) = \mathcal{R}(\tfrac{\pi}{10},u),~
    R_2(0) = \mathcal{R}(\tfrac{9\pi}{10},u),~
    R_3(0) = \mathcal{R}(\tfrac{4\pi}{10},u),~
    R_4(0) = \mathcal{R}(\tfrac{3\pi}{10},u),~
    R_5(0) = \mathcal{R}(\tfrac{2\pi}{10},u),~
    R_6(0) = \mathcal{R}(\tfrac{8\pi}{10},u),~
    R_7(0) = \mathcal{R}(\tfrac{7\pi}{10},u),~
    R_8(0) = \mathcal{R}(\tfrac{6\pi}{10},u)$.
The initial angular velocities are randomly selected as follows: $\omega_1(0)=[0.1~0.6~0.6]^\top,~
    \omega_2(0)=[0.4~0.95~0.87]^\top,~
    \omega_3(0)=[0.73~0.69~0.58]^\top,~
    \omega_4(0)=[0~0.87~0]^\top,~
    \omega_5(0)=[0.45~0.18~0.48]^\top,~
    \omega_6(0)=[0.74~0~1]^\top,~
    \omega_7(0)=[0.5~0.7~0.94]^\top,~
    \omega_8(0)=[0.69~0.73~0.5]^\top$.
Two inertial reference vectors are used: $a_1 = [1~0~0]^\top$ and $a_2 = [0~0~1]^\top$, with $n=2$, $\rho_1=1$, $\rho_2=2$, and $k_R = 1$. The corresponding results are shown in Figs.~\ref{sim}, \ref{sim_w0}, and \ref{sim_wc}.  
Fig.~\ref{sim} illustrates that control law \eqref{fc_vector_meas} achieves attitude synchronization to a constant common orientation. Figs.~\ref{sim_w0} and \ref{sim_wc} show the performance of controller \eqref{torque_i}. When $k_\omega>0$ and $\bar{k}_\omega=1$, both the attitudes and angular velocities converge to common constant values. When $k_\omega=0$ and $\bar{k}_\omega=1$, the satellites synchronize to a common nonzero angular velocity, resulting in convergence to a common time-varying attitude, as predicted by the theoretical analysis.

\begin{figure}[!]
    \centering
    \includegraphics[width=0.99\linewidth]{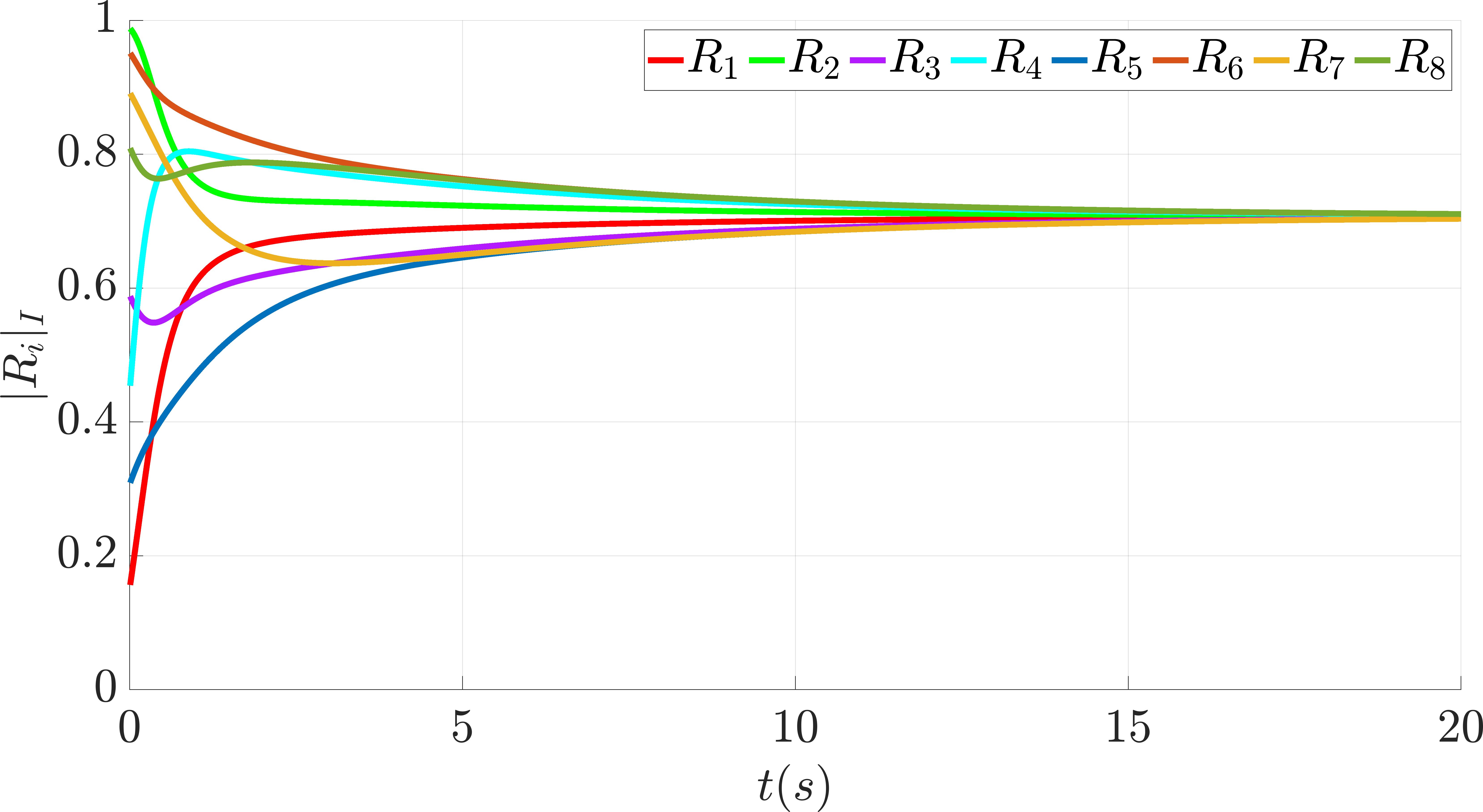}
    \caption{Time evolution of the attitudes under the control \eqref{fc_vector_meas}.}
    \label{sim}
\end{figure}

\begin{figure}[!]
    \centering
    \includegraphics[width=0.99\linewidth]{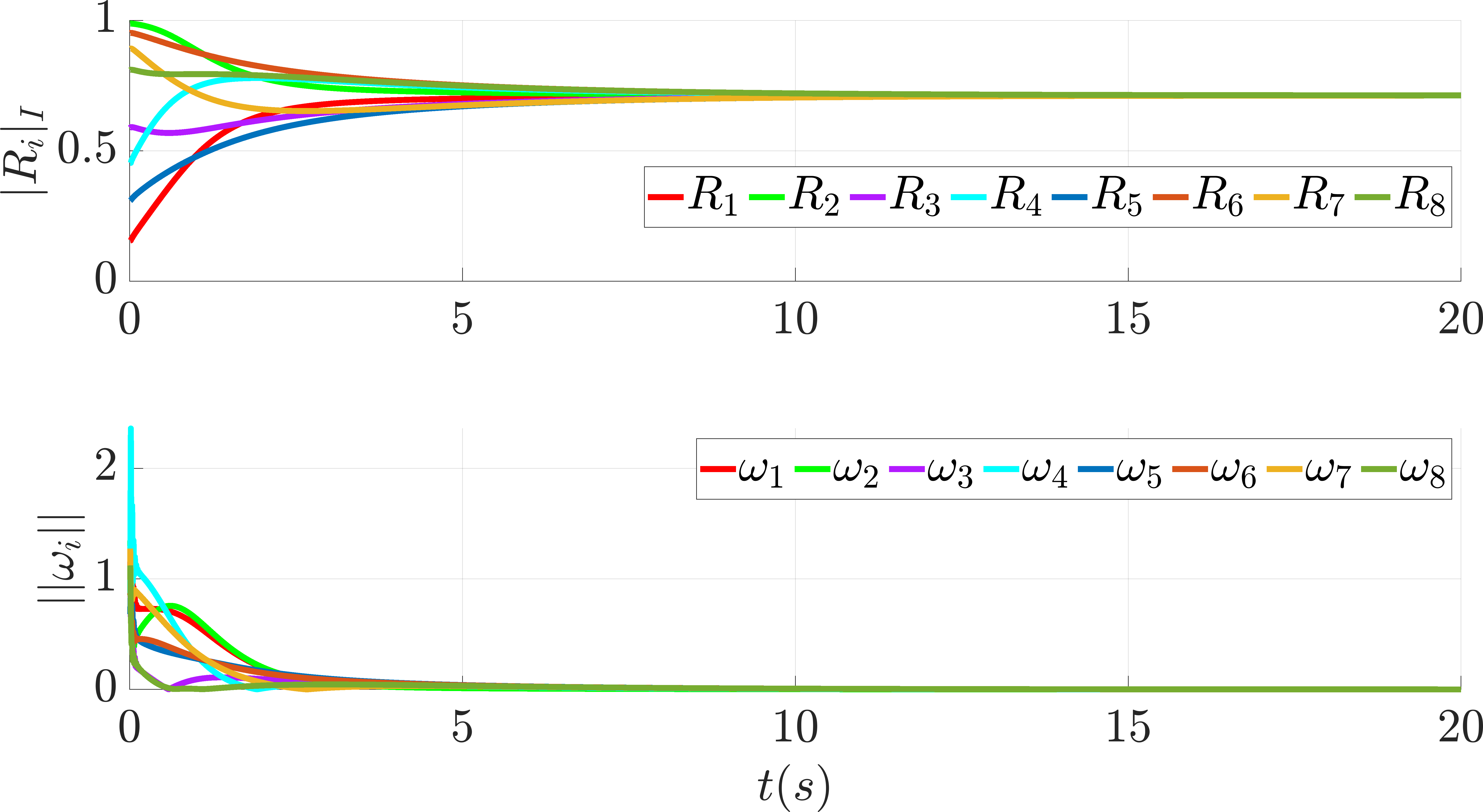}
    \caption{Time evolution of the attitudes and angular velocities under the control \eqref{torque_i}, with $k_\omega=1$ and $\bar{k}_\omega=1$.}
    \label{sim_w0}
\end{figure}

\begin{figure}[!]
    \centering
    \includegraphics[width=0.99\linewidth]{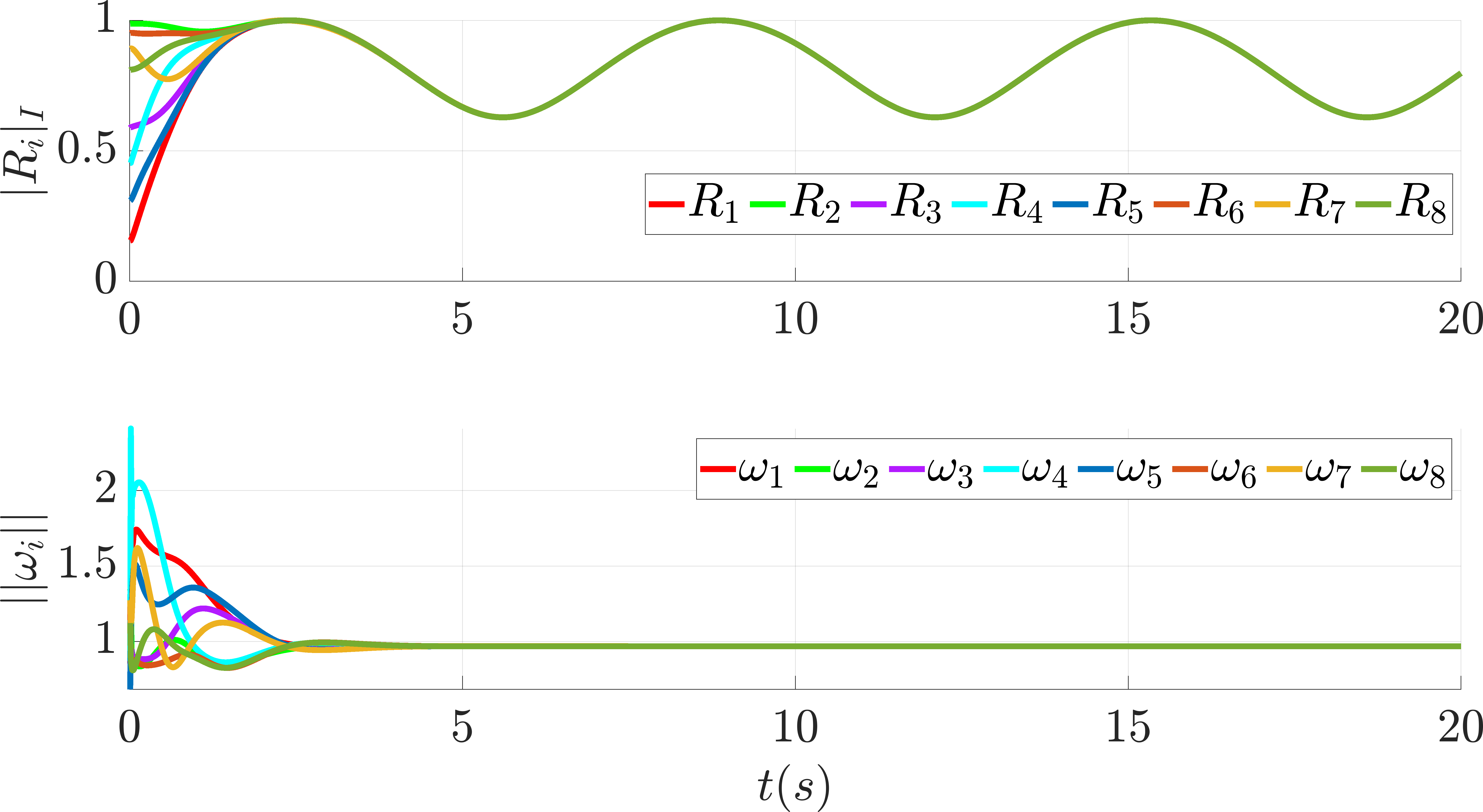}
    \caption{Time evolution of the attitudes and angular velocities under the control \eqref{torque_i}, with $k_\omega=0$ and $\bar{k}_\omega=1$. The satellites converge to a common time-varying attitude with a constant angular velocity.}
    \label{sim_wc}
\end{figure}

For the leader–follower scenario, agent~1 is designated as the leader and has access to reference vector measurements corresponding to the desired orientation $R_r = \mathcal{R}\left(\tfrac{\pi}{2},u\right)$. The same network topology, initial conditions, and inertial parameters are used. The additional parameters are chosen as $\bar{\rho}_1=1$, $\bar{\rho}_2=2$, and $k_1=1$. As highlighted in Theorems \ref{theorem_4} and \ref{theorem_5} and shown in Figures \ref{sim_leader} and \ref{sim_w0_leader}, the absolute attitude of each agent under both feedback laws \eqref{w_leader} and \eqref{torque_i_leader}, with $k_\omega = 1$ and $\bar{k}_\omega=1$, converges to the reference orientation $R_r$.

\begin{figure}[!]
    \centering
    \includegraphics[width=0.99\linewidth]{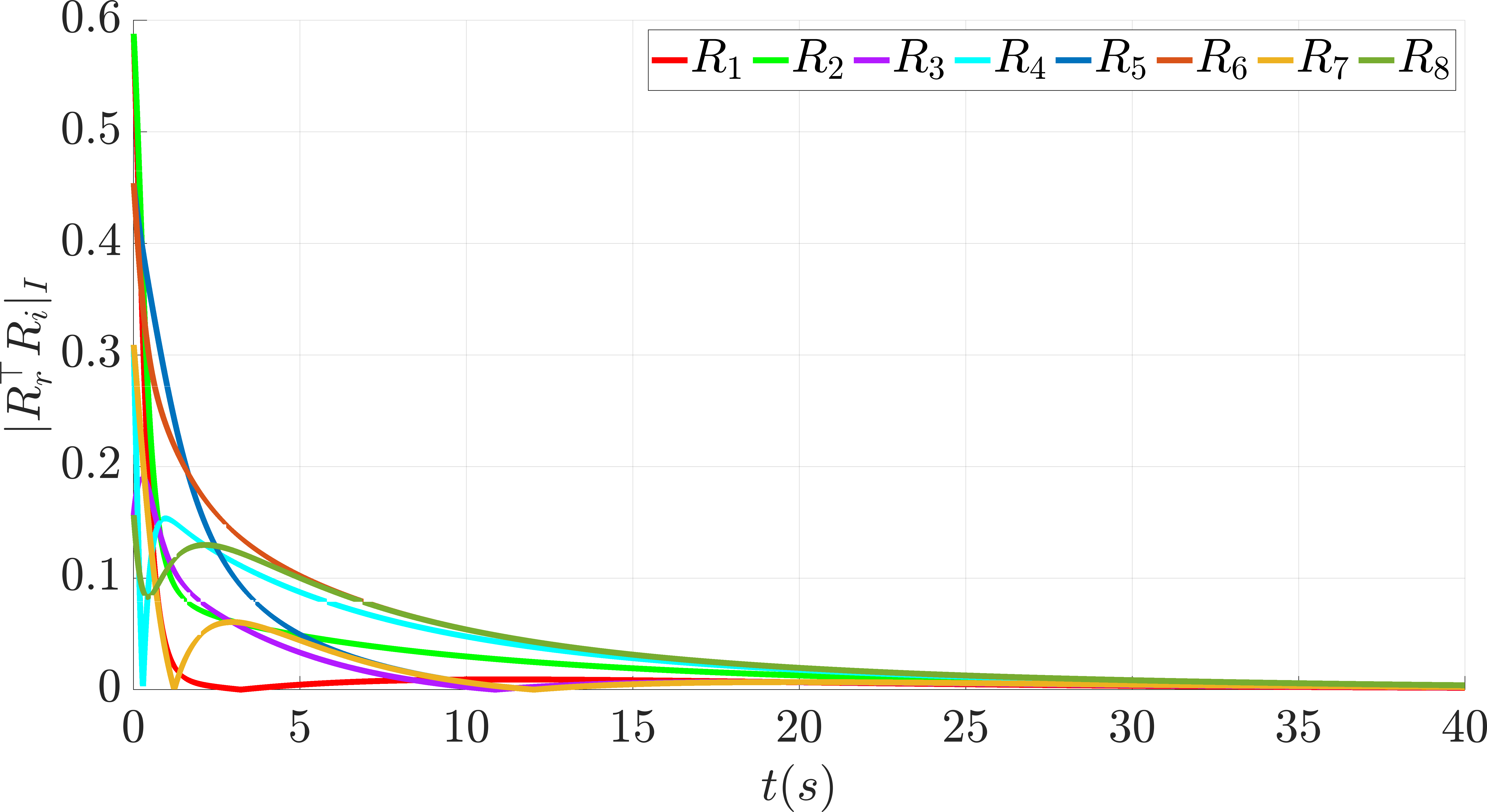}
    \caption{Time evolution of the attitudes under the control \eqref{w_leader}.}
    \label{sim_leader}
\end{figure}

\begin{figure}[h!]
    \centering
    \includegraphics[width=0.99\linewidth]{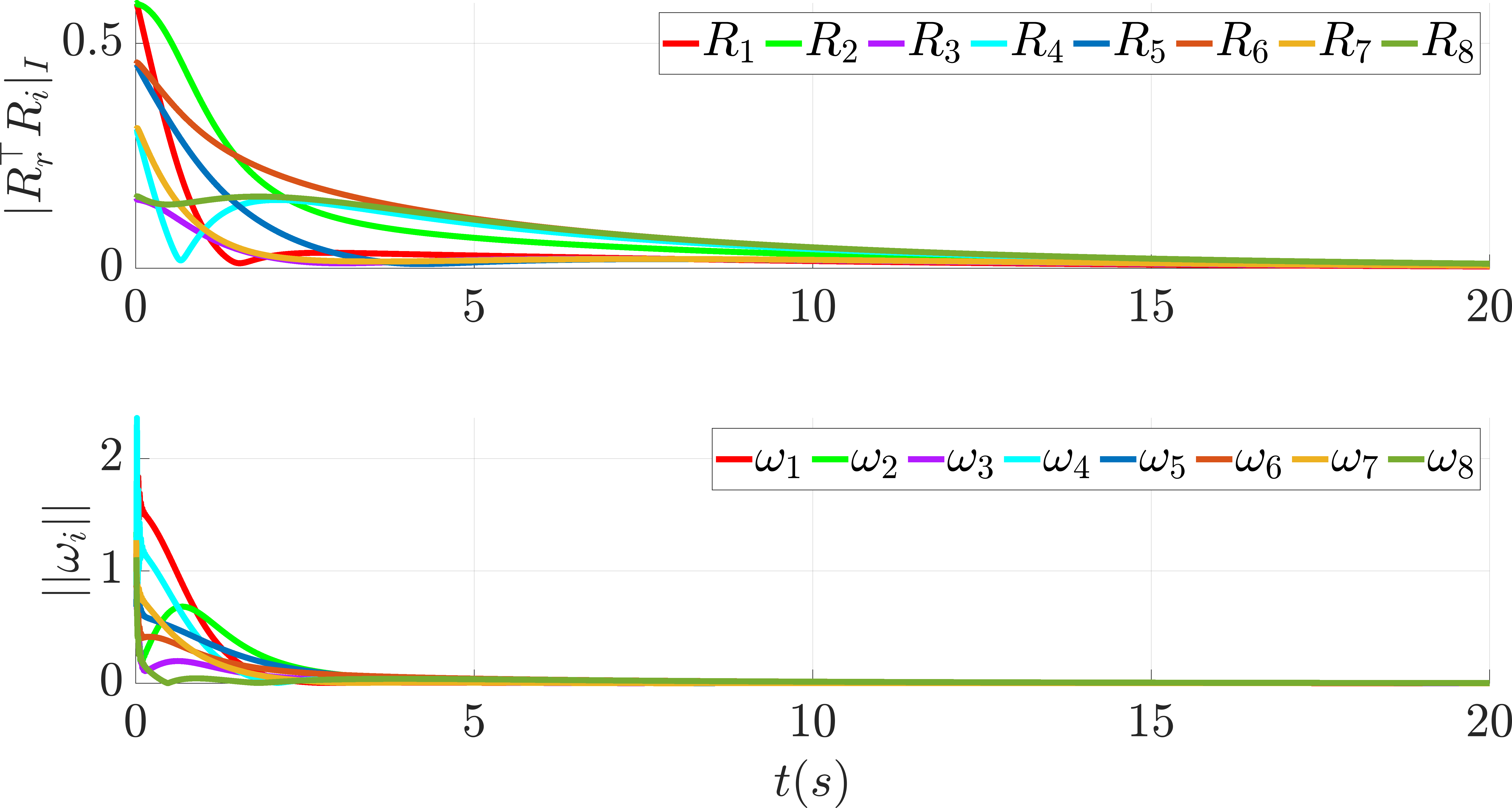}
    \caption{Time evolution of the attitudes and angular velocities under the control \eqref{torque_i_leader}, with $k_\omega=\bar{k}_\omega=k_1=1$.}
    \label{sim_w0_leader}
\end{figure}

\section{CONCLUSIONS}\label{s7}
This paper studied the distributed attitude synchronization problem for a network of rigid-body systems evolving on $SO(3)$ under partial state exchange. Relying only on local vector measurements and angular velocity information, we developed distributed synchronization schemes at both the kinematic and dynamic levels. Leaderless designs were proposed to achieve synchronization to a common unknown orientation, while leader-follower schemes were developed to ensure alignment with a prescribed constant orientation defined through reference measurements available to a single leader. All control laws were formulated directly on $SO(3)$. A rigorous stability analysis established almost global asymptotic stability of the closed-loop systems under an undirected, connected, and acyclic interaction topology. 
Future work will focus on further improving the stability properties of the proposed schemes by exploiting multi-agent hybrid control techniques, such as those recently developed in \cite{Mouaad_TAC2024,Mouaad_TAC2026}, in order to achieve global asymptotic stability and eliminate the undesired critical points inherent to smooth designs on $SO(3)$. Another important research direction is the extension of the proposed framework to more general interaction topologies, including directed graphs and time-varying networks, which would broaden the applicability of the results to a wider class of multi-agent systems.

\appendices 
\section{Proof of Theorem \ref{theorem_1}} \label{app_1}
It follows from \eqref{vector_measurement} and \eqref{fc_vector_meas} that 
{\small
\begin{align}\label{w_1}
    \omega_i 
    &=\frac{k_R}{2}R_i^\top \sum_{j \in \mathcal{N}_i} \sum_{\ell=1}^n \rho_\ell \left(R_i R_j^\top a_\ell \times a_\ell \right).
\end{align}}Using the facts that $x \times y = 2 \psi(y x^\top)$ for every $x, y \in \mathbb{R}^3$, one has 

{\small
\begin{align}
    \omega_i &= k_R R_i^\top \sum_{j \in \mathcal{N}_i} \psi(AR_j R_i^\top).
\end{align}}Recall that $A:=\sum_{l=1}^n \rho_\ell a_\ell a_\ell^\top$, and that the map $\psi$ is introduced in the notation subsection. Since the graph $\mathcal{G}$ is undirected but endowed with virtual orientation, the neighborhood $\mathcal{N}_i$ of an agent $i$ can be decomposed as $\mathcal{N}_i = \mathcal{I}_i \cup \mathcal{O}_i$, where $\mathcal{I}_i$ denotes the subset of neighbors $j \in \mathcal{N}_i$ where $j$ acts as the tail of the oriented edge $(i,j) \in \mathcal{E}$ (\ie, edges directed from $j$ to $i$) and $\mathcal{O}_i$ denotes the subset of neighbors $j \in \mathcal{N}_i$ where $j$ acts as the head of the oriented edge $(i,j) \in \mathcal{E}$ (\ie, edges directed from $i$ to $j$). Thus, $\omega_i$ can be expressed as follows:

{\small
\begin{align}
    \omega_i &=k_R R_i^\top \left(\sum_{j \in \mathcal{I}_i } \psi(AR_j R_i^\top)+\sum_{j \in \mathcal{O}_i } \psi(AR_j R_i^\top)\right)\nonumber\\
    &=k_R R_i^\top \left(\sum_{j \in \mathcal{I}_i } \psi(AR_j R_i^\top)-\sum_{j \in \mathcal{O}_i } \psi(R_i R_j^\top A)\right)\label{eq_1}\\
    &=k_R R_i^\top \left(\sum_{j \in \mathcal{I}_i } \psi(AR_j R_i^\top)-\sum_{j \in \mathcal{O}_i } R_i R_j^\top \psi(AR_i R_j^\top)\right)\label{eq_2}\\
    &=k_R R_i^\top \left(\sum_{n \in \mathcal{M}_i^+} \psi(A \bar R_n)-\sum_{l \in \mathcal{M}_i^- } \bar R_l \psi(A \bar R_l)\right)\nonumber\\
    &=k_R R_i^\top \sum_{k=1}^M H_{ik} \psi(A \bar R_k),
\end{align}
}where $H_{ik}$ is given in (\ref{H_bar}). Equations (\ref{eq_1}) and (\ref{eq_2}) are obtained using the facts that $\psi(BR)=-\psi(R^\top B)$ and $\psi(GR)=R^\top \psi(RG)$, $\forall G, B=B^\top \in \mathbb{R}^{3\times3} $ and $R\in SO(3)$. Moreover, one can verify that
 \begin{equation}\label{w_c}
     \omega = k_R \mathbf{R}^\top \mathbf{H} \Psi,
 \end{equation}
 where $\Psi:=\left[\psi(A\bar R_1)^\top, \psi(A\bar R_2)^\top, \hdots, \psi(A\bar R_M)^\top\right]^\top \in \mathbb{R}^{3M}$. The matrices $\mathbf{R}$ and $\mathbf{H}$ are introduced in \eqref{w_bar}. Next, define the following Lyapunov function candidate:
 {\small
\begin{equation}
    V(x) = \sum_{k=1}^{M} \text{tr}\left(A(I_3-\bar{R}_k)\right),
\end{equation}}which is positive definite on $\mathcal{S}$ with respect to $\mathcal{A}$. It follows from the dynamics \eqref{R_bar_dynamics_k} that $\dot{V}(x)=-\sum_{k=1}^{M} \text{tr}\left(A\bar{R}_k [\bar \omega_k]^\times\right)=2\bar \omega^\top \Psi$. We have used the facts $\text{tr}\left(B[x]^\times\right)=\text{tr}\left(\mathbb{P}_a(B)[x]^\times\right)$ and $\text{tr}\left([x]^\times [y]^\times\right)=-2x^\top y$, $\forall x, y\in \mathbb{R}^3$ and $\forall B \in \mathbb{R}^{3\times3}$ to get the last equation.
From \eqref{w_bar}, one has 
\begin{equation}
    \dot V(x)=-2 \omega^\top \mathbf{R}^\top \mathbf{H} \Psi.
\end{equation}
Furthermore, it follows from \eqref{w_c} that 
{\small
\begin{equation}\label{V_dot1}
    \dot V(x)=-2 k_R ||\mathbf{R}^\top \mathbf{H} \Psi||^2 \leq 0.
\end{equation}}Therefore, the equilibrium set $\mathcal{A}$ is stable. Furthermore, as per LaSalle’s invariance theorem, all solutions $x$ of system \eqref{R_bar_dynamics_k} with \eqref{fc_vector_meas} must converge to the largest invariant set within the region where $\dot{V}(x) = 0$, \ie, where $\textit{\textbf{H}}\Psi = 0$. By \cite[Lemma 2]{Mouaad_ACC23}, the condition $\textit{\textbf{H}}\Psi = 0$ requires $\Psi = 0$. This result further yields the equality $A \bar{R}_k = \bar{R}_k^\top  A$, for all $k \in \mathcal{M}$. Now, considering similar arguments as in \cite[Lemma 2]{Mayhew_ACC2011}, one can show that every solution \(x\) of the dynamics \eqref{R_bar_dynamics_k} with \eqref{fc_vector_meas} must converge to the set \(\Upsilon\). This concludes the proof of item \eqref{set_of_equilibrium}.

Next, we will demonstrate the almost global asymptotic stability of $\mathcal{A}$ by showing that the stable manifold associated with the undesired equilibrium set $\Upsilon \setminus \mathcal{A}$ has zero Lebesgue measure. To do this, we derive the Jacobian matrix associated with the dynamics \eqref{R_bar_dynamics_k} under the feedback control \eqref{fc_vector_meas} at the undesired equilibria, and show that this Jacobian matrix has no imaginary eigenvalues and has at least one positive eigenvalue. Let $\bar R_k=\bar R_k^* \exp{\left([\bar r_k]^\times\right)}$,  where $\bar r_k\in \mathbb{R}^3$ is sufficiently small and $\bar R_k^* \in \{\bar R_n, \bar R_m\}$. Considering the expression of $\bar R_k$ and using the first-order approximation $\exp\big([\bar r_k]^\times\big) \approx I_3 + [\bar r_k]^\times$ for sufficiently small $\bar r_k$, we obtain the following linearization of $\bar R_k$ around any point in the undesired equilibrium set $\Upsilon \setminus \mathcal{A}$:
\begin{equation}\label{first_order_app}
    \bar R_k = \bar R_k^*(I_3 + [\bar r_k]^\times),
\end{equation}
for every $k \in \mathcal{M}$. Define $\bar r:=\left[\bar r_1^\top, \bar r_2^\top,\dots, \bar r_M^\top\right]^\top \in \mathbb{R}^{3M}$. From dynamics \eqref{R_bar_dynamics_k} under feedback control \eqref{fc_vector_meas}, one obtains the following linearized system:
\begin{equation}\label{linear_sys_und}
    \dot{\bar r}=-\frac{k_R}{2} \mathbf{H}_c^\top \mathbf{H}_c\, \mathbf{A}^*\, \bar r,
\end{equation}
where $\mathbf{A}^* = \text{diag}(A^*_1, A^*_2, \hdots, A^*_M)\in \mathbb{R}^{3M\times 3M}$ with $A^*_k=\text{tr}(A\bar R^*_k)I_3 - A \bar R^*_k$ and  
\begin{equation}\label{H_bar_c}
   \mathbf{H}_c:=[H^c_{ik}]_{N\times M} \hspace{0.3cm} \text{with} \hspace{0.3cm} H^c_{ik}=\begin{cases}
      I_3 & k\in\mathcal{M}_i^+\\
      -\bar R^*_k & k\in\mathcal{M}_i^-\\
      0 & \text{otherwise}
    \end{cases}.   
\end{equation}
Linear dynamics \eqref{linear_sys_und} has been obtained by using the fact that $B [z]^\times + [z]^\times B^{\top} = \left[\left(\mathrm{tr}(B)I_3-B^{\top}\right) z\right]^\times$, $ \forall z\in \mathbb{R}^3$, $\forall B \in \mathbb{R}^{3\times3}$. The matrix $-\frac{k_R}{2} \mathbf{H}_c^\top \mathbf{H}_c\, \mathbf{A}^*$ represents the Jacobian matrix at the undesired equilibria, where $\mathbf{H}_c$ is a constant matrix. Note that the eigenvalues of the matrix $\mathbf{A}^*$ are given by the union of the eigenvalues of $A^*_k$ for all $k \in \mathcal{M}$. From the fact that \((\bar{R}^*_1, \bar{R}^*_2, \dots, \bar{R}^*_M) \in \Upsilon \setminus \mathcal{A}\), one has $A_m^* = \text{tr}(A)\, I_3 - A$ and $A_n^* \in \{{}^{1}A_n, {}^{2}A_n, {}^{3}A_n\}$, for every $m \in \mathcal{M}^I$ and $n \in \mathcal{M}^\pi$, where 
${}^{\beta_n}A_n = \text{tr}\left(A \mathcal{R}(\pi, u_{\beta_n})\right)I_3 - A \mathcal{R}(\pi, u_{\beta_n})$
for each $\beta_n \in \{1, 2, 3\}$ and $u_{\beta_n} \in \mathcal{E}(A)$. Moreover, using the fact that $\mathcal{R}(\pi, u_{\beta_n})=-I_3+2u_{\beta_n} u_{\beta_n}^\top$, it follows that
${}^{\beta_n}A_n = -\text{tr}(A)I_3 + 2\lambda_{\beta_n}I_3 + A - 2\lambda_{\beta_n}u_{\beta_n}u_{\beta_n}^\top$, 
where $\lambda_{\beta_n}$ is the eigenvalue of $A$ corresponding to the eigenvector $u_{\beta_n}$. Note that, for each \(n \in \mathcal{M}^\pi \), the matrix \( A_n^* \) can take one of three possible values (\ie, ${}^{1}A_n, {}^{2}A_n$ or ${}^{3}A_n$), depending on the choice of the eigenvector of the matrix \( A \). The set of eigenpairs of the matrix \( A_m^* \) is given by $\left\{(\lambda_2 + \lambda_3, u_1), (\lambda_1 + \lambda_3, u_2), (\lambda_1 + \lambda_2, u_3)\right\}$, for all \( m \in \mathcal{M}^I \). For the matrices \( {}^{1}A_n \), \( {}^{2}A_n \), and \( {}^{3}A_n \), the eigenpair sets are found to be: $\{(-\lambda_2 - \lambda_3, u_1), (\lambda_1 - \lambda_3, u_2), (\lambda_1 - \lambda_2, u_3)\}$, $\{(\lambda_2 - \lambda_3, u_1), (-\lambda_1 - \lambda_3, u_2), (\lambda_2 - \lambda_1, u_3)\}$, and $\{(\lambda_3 - \lambda_2, u_1), (\lambda_3 - \lambda_1, u_2), (-\lambda_1 - \lambda_2, u_3)\}$, respectively,
for all \( n \in \mathcal{M}^\pi \). Given the fact that the matrix $A$ is a positive semi-definite matrix with three distinct eigenvalues ( \ie, \( \lambda_1 \neq \lambda_2 \neq \lambda_3 \) ), one can check that the eigenvalues of \( A^*_k \) must either be all negative or contain a mixture of positive and negative values which implies that the eigenvalues of the matrix $\mathbf{A}^*$ are either all negative or some of them are positive and some are negative. We will now demonstrate that the matrices $\mathbf{A}^*$ and $\mathbf{H}_c^\top \mathbf{H}_c\, \mathbf{A}^*$ have the same number of positive and negative eigenvalues. Since the interaction graph is an undirected tree, and by following arguments similar to those in the proof of \cite[Lemma 2]{Mouaad_ACC23}, one can show that $\mathbf{H}_c$ has full column rank. Consequently, we can rewrite the matrix $\mathbf{H}_c^\top \mathbf{H}_c\, \mathbf{A}^*$ as:  
\begin{align}\label{similarity}
    \mathbf{H}_c^\top \mathbf{H}_c\, \mathbf{A}^* = \left(\mathbf{H}_c^\top \mathbf{H}_c\right)^\frac{1}{2}\,\mathbf{U}\,\left(\mathbf{H}_c^\top \mathbf{H}_c\right)^{-\frac{1}{2}},
\end{align}
where $\mathbf{U}:=\left(\mathbf{H}_c^\top \mathbf{H}_c\right)^\frac{1}{2}\, \mathbf{A}^*\,\left(\mathbf{H}_c^\top \mathbf{H}_c\right)^\frac{1}{2}$. From \eqref{similarity}, one can see that the matrices $\mathbf{U}$ and $\mathbf{H}_c^\top \mathbf{H}_c\, \mathbf{A}^*$ are similar, \ie, they have identical eigenvalues. Furthermore, by Sylvester's law of inertia \cite[Theorem 4.5.8]{Horn_Johnson_2012}, $\mathbf{U}$ and $\mathbf{A}^*$ have the same number of positive and negative eigenvalues. Given that the eigenvalues of $\mathbf{A}^*$ are either all negative or a mix of positive and negative, it follows that the Jacobian matrix $-\frac{k_R}{2} \mathbf{H}_c^\top \mathbf{H}_c\, \mathbf{A}^*$ has at least one positive eigenvalue and no zero eigenvalue. As a result, the set of undesired equilibrium points $\Upsilon \setminus \mathcal{A}$ is unstable. Moreover, by the \textit{stable manifold theorem} \cite{Perko_book}, 
the stable manifold associated with the undesired equilibrium set $\Upsilon \setminus \mathcal{A}$ has Lebesgue measure zero. Consequently, the equilibrium set $\mathcal{A}$ is almost globally asymptotically stable. This concludes the proof of items (\ref{unstability_of_equilibrium}) and (\ref{stability_of_equilibrium}).

\section{Proof of Theorem \ref{theorem_2}} \label{app_2}
Applying the same calculations used to derive \eqref{w_c} from \eqref{w_1}, and considering \eqref{w_dynamics_i} and \eqref{torque_i}, it can be verified that 

{\small
    \begin{equation}\label{w_total_dyn}
        \mathbf{J} \dot \omega =k_R \mathbf{R}^\top \mathbf{H} \Psi-k_\omega \omega - \bar k_\omega (\mathcal{L}\otimes I_3) \omega,
    \end{equation}}where 
    $\mathbf{J} :=\text{diag}(J_1, J_2, \hdots, J_N)\in \mathbb{R}^{3N\times 3N}$, and $\mathcal{L}:= H H^\top  \in \mathbb{R}^{N\times N}$ is the Laplacian matrix corresponding to graph $\mathcal{G}$. Consider the following Lyapunov function candidate:

    {\small
    \begin{equation}\label{pf}
        \bar V(\bar x) = k_R \sum_{k=1}^{M} \text{tr}\left(A(I_3-\bar{R}_k)\right)+\omega^\top \mathbf{J} \omega,
    \end{equation}}which is positive definite on $\bar{\mathcal{S}}$ with respect to $\bar{\mathcal{A}}_0$. The time-derivative of $\bar V$, along the trajectories of the dynamics \eqref{w_dynamics_i} and \eqref{R_bar_dynamics_k} with \eqref{torque_i}, is given by

{\small
\begin{equation}\label{v_z_dot}
    \dot{\bar V}(\bar x)=- 2 k_\omega ||\omega||^2 -2 \bar k_\omega ||(H^\top \otimes I_3) \omega||^2.
\end{equation}}Furthermore, from the fact that $R_i$ and $\omega_i$ are bounded (since $\dot{\bar V}\leq0$), one can verify that $\Ddot{\bar{V}}$ is also bounded. This implies that $\dot{\bar{V}}$ is uniformly continuous, and by Barbalat’s lemma,  all solutions \(\bar{x}\) converge to the largest invariant set contained in \(\{\bar{x} \in \bar{\mathcal{S}} \mid \dot{\bar{V}}(\bar{x}) = 0\}\). The condition $\dot{\bar{V}}(\bar{x}) = 0$ implies $\omega = 0$, which, according to \eqref{w_total_dyn}, leads to $\mathbf{H} \Psi = 0$. Consequently, based on \cite[Lemma 2]{Mouaad_ACC23} and the arguments in \cite[Lemma 2]{Mayhew_ACC2011}, it follows that any solution $\bar{x}$ of the dynamics \eqref{w_dynamics_i} and \eqref{R_bar_dynamics_k} with \eqref{torque_i} converges to the set $\bar{\Upsilon}_0 :=\bar{\mathcal{A}}_0 \cup \{x \in \mathcal{S} : \omega=0, \, \bar{R}_m = I_3, \, \bar R_n=\mathcal{R}(\pi, u_{\beta_n}), \, \forall m \in \mathcal{M}^I, \, \forall n \in \mathcal{M}^\pi\}$. This concludes the proof of item \eqref{dyn_set_of_equilibrium}.

To prove items \eqref{dyn_unstability_of_equilibrium} and \eqref{dyn_stability_of_equilibrium}, we will show that the set $\bar \Upsilon_0 \setminus \bar{\mathcal{A}}_0$, which consists of undesired equilibrium points, is unstable and that its stable manifold has zero Lebesgue measure. we will first show that the equilibrium set $\bar \Upsilon_0$ coincides with the set of critical points of the potential function $\bar V(\bar x)$. To do so, let us find the gradient of the potential function $\bar V$ with respect to $\bar{R}_k$ and $\omega_i$ for all $k \in \mathcal{M}$ and $i \in \mathcal{V}$. Let $\mathbb{O} \subset \mathbb{R}$ be an open interval containing zero in its interior. For each $k \in \mathcal{M}$ and $i \in \mathcal{V}$, we define the two smooth curves $\varphi_k: \mathbb{O} \to SO(3)$ and $\gamma_i: \mathbb{O} \to \mathbb{R}^3$ as $\varphi_k(t) = \bar{R}_k \exp\left(t [\zeta_k]^\times\right)$ and $\gamma_i(t) = \omega_i+v_i t$, where $\bar R_k \in SO(3)$, $\omega_i\in \mathbb{R}^3$, $ \zeta_k\in \mathbb{R}^3$ and $ v_i \in \mathbb{R}^3$. Define $\bar{\mathbf{x}}(t) := \left(\varphi_1(t), \varphi_2(t), \ldots, \varphi_M(t), \gamma_1(t), \gamma_2(t), \ldots, \gamma_N(t)\right) \in \bar{\mathcal{S}}$. The derivative of $\bar{V}\left(\bar{\mathbf{x}}(t)\right)$ with respect to $t$ is given by:

{\small
\begin{align}\label{hess}
    \frac{d}{dt}\bar{V}\left(\bar{\mathbf{x}}(t)\right)=&-k_R\sum_{k=1}^{M} \text{tr}\left(A\bar{R}_k\exp{\left(t [\zeta_k]^\times\right)}[\zeta_k]^\times\right)\nonumber\\
    &-2\sum_{i=1}^N v_i^\top J_i (\omega_i+v_i t).
\end{align}}At $t=0$, one has
{\small
\begin{align}\label{gradient_v}
        \left.\frac{d}{dt}\bar{V}\left(\bar{\mathbf{x}}(t)\right)\right|_{t=0}
        =&2\,k_R\sum_{k=1}^{M} \zeta_k^\top \psi(A \bar R_k)+2\sum_{i=1}^N v_i^\top J_i \omega_i\nonumber\\
        =&2 \begin{bmatrix}
            \zeta^\top&v^\top
        \end{bmatrix}
        \begin{bmatrix}
            k_R\,\Psi\\ \mathbf{J} \omega
        \end{bmatrix}.
    \end{align}}Note that {\small $\mathbf{J}\omega=\left[\left(\nabla_{\omega_1} \bar{V}\right)^\top,\left(\nabla_{\omega_2} \bar{V}\right)^\top, \hdots, \left(\nabla_{\omega_N} \bar{V}\right)^\top\right]^\top$ $\in \mathbb{R}^{3N}$} and {\small $\Psi=\Bigg[\psi\left(\bar R^\top_1\nabla_{\bar R_1} \bar{V}\right)^\top,\psi\left(\bar R^\top_2\nabla_{\bar R_2} \bar{V}\right)^\top, \hdots, $ $\psi\left(\bar R^\top_M\nabla_{\bar R_M} \bar{V}\right)^\top\Bigg]^\top$ $\in \mathbb{R}^{3M}$}, where $\nabla_{\bar R_k} \bar{V}$ and $\nabla_{\omega_i} \bar{V}$ are the gradients of $ \bar{V}$ with respect to $\bar R_k$ and $\omega_i$, respectively, according to the \textit{Riemannian} metrics $\langle \eta_1, \eta_2 \rangle_{SO(3)} = \frac{1}{2} \operatorname{tr}(\eta_1^\top \eta_2)$ and $\langle y_1, y_2 \rangle_{\mathbb{R}^3} = y_1^\top y_2$ for every $\eta_1, \eta_2\in \mathfrak{so}(3)$ and $y_1, y_2 \in \mathbb{R}^3$. The critical points of the potential function $\bar{V}$ are given by the set $\{\bar x \in \bar{\mathcal{S}} : \Psi = 0 , \omega = 0\}$. It is clear that the set of equilibria $\bar \Upsilon_0$ of the dynamics \eqref{w_dynamics_i} and \eqref{R_bar_dynamics_k} under the control torque \eqref{torque_i} coincides with the set of critical points of the potential function $\bar V(\bar x)$. Now, let us evaluate the \textit{Hessian} of \( \bar{V}(\bar{x}) \) at $ \bar x \in \bar{\Upsilon}_0 \setminus \bar{\mathcal{A}}_0$, denoted as \( \text{\textit{Hess}}\,\bar{V}(\bar{x}) \). Consider the two smooth curves $\varphi_k$ and $\gamma_i$, where $\bar R_k=\bar R^*_k$ and $\omega_i=\omega_i^*$ with $(\bar{R}^*_1, \bar{R}^*_2, \ldots, \bar{R}^*_M, \omega^*_1, \omega^*_2, \ldots, \omega^*_N) \in \bar{\Upsilon}_0 \setminus \bar{\mathcal{A}}_0$. The second derivative of $\bar V(\bar x)$ with respect to $t$ is given by: 
    
{\small \begin{align}\label{hess}
    \frac{d^2}{dt^2}\bar V(\bar x)=&-k_R\sum_{k=1}^{M} \text{tr} \left(A\bar{R}^*_k\exp{\left(t [\zeta_k]^\times\right)}\left([\zeta_k]^\times\right)^2\right)\nonumber\\
    &-k_R\sum_{k=1}^{M} \text{tr}\left(A\bar{R}^*_k\exp{\left(t [\zeta_k]^\times\right)}[\dot \zeta_k]^\times\right)\nonumber\\
    &+\sum_{i=1}^N v_i^\top J_i v_i+\sum_{i=1}^N \dot{v}_i^\top J_i (\omega^*_i+v_i t).
\end{align}}From the fact that $\mathbb{P}_a(A\bar{R}^*_k) = 0$ and $\omega_i^* = 0$ for all $(\bar{R}^*_1, \bar{R}^*_2, \ldots, \bar{R}^*_M, \omega^*_1, \omega^*_2, \ldots, \omega^*_N) \in \Upsilon_z \setminus \mathcal{A}_z$, it follows that
{\small
\begin{align}\label{hess_2}
    \left.\frac{d^2}{dt^2}\bar V(\bar x)\right|_{t=0}=&-k_R\sum_{k=1}^{M} \text{tr} \left(A\bar{R}^*_k\left([\zeta_k]^\times\right)^2\right)+\sum_{i=1}^N v_i^\top J_i v_i.
\end{align}
}Using the identity $\left([\zeta]^\times\right)^2 = -\|\zeta\|^2 I_3 + \zeta \zeta^{\top}$ and the property $\text{tr}(\zeta_1 \zeta_2^{\top}) = \zeta_1^{\top} \zeta_2$ for all $\zeta, \zeta_1, \zeta_2 \in \mathbb{R}^3$, we further simplify equation \eqref{hess_2} as follows:
\begin{align}\label{hess_3}
    \left.\frac{d^2}{dt^2}\bar V(\bar x)\right|_{t=0}\!= k_R\sum_{k=1}^{M} \zeta^{\top}_k A^*_k \zeta_k+\sum_{i=1}^N v_i^\top J_i v_i.
\end{align}
Letting $\zeta = [\zeta_1^\top, \zeta_2^\top, \hdots, \zeta_M^\top]^\top \in \mathbb{R}^{3M}$ and $v=[v_1^\top, v_2^\top, \ldots, v_N^\top]^\top \in \mathbb{R}^{3N}$, one has
\begin{align}\label{hess_4}
    \left.\frac{d^2}{dt^2}\bar V(\bar x)\right|_{t=0}\!&=\begin{bmatrix}
        \zeta\\v
    \end{bmatrix}^\top \begin{pmatrix}
        k_R\,\mathbf{A}^*&0_{3M\times3N}\\
        0_{3N\times3M}&\mathbf{J}
    \end{pmatrix}\begin{bmatrix}
        \zeta \\ v
    \end{bmatrix}.
\end{align}
Recall that $\mathbf{A}^* = \text{diag}(A^*_1, A^*_2, \hdots, A^*_M)$, where $A^*_k=\text{tr}(A\bar R^*_k)I_3 - A \bar R^*_k$. According to \cite{Mahony_book_OAMM}, it follows from \eqref{hess_4} that  
\begin{equation}\label{mtx_hess}
    \text{\textit{Hess}}\,\bar V(\bar x)=\begin{pmatrix}
        k_R\,\mathbf{A}^*&0_{3M\times3N}\\
        0_{3N\times3M}&\mathbf{J}
    \end{pmatrix},
\end{equation} 
for every $\bar x \in \bar \Upsilon_0 \setminus \bar{\mathcal{A}}_0$. Since the eigenvalues of \( \mathbf{A}^* \) are either negative or a mixture of positive and negative values, together with the fact that the matrix \( \mathbf{J} \) is positive definite, implies that the critical points of \( \bar{V}(\bar x) \) within \( \bar \Upsilon_0 \setminus \bar{\mathcal{A}}_0 \) are saddle points of \( \bar{V}(\bar x) \). Next, we will determine the stability properties of the set of undesired equilibrium points $\bar \Upsilon_0 \setminus \bar{\mathcal{A}}_0$ for the dynamics \eqref{w_dynamics_i} and \eqref{R_bar_dynamics_k} with the control torque \eqref{torque_i}. Consider the real-valued function $\bar{V}^*(\bar x): SO(3)^M \times \mathbb{R}^{3N}\rightarrow \mathbb{R}$, defined as $\bar{V}^*(\bar x) = 2\, k_R \sum_{n \in \mathcal{M}^\pi} (\lambda_{p_n} + \lambda_{d_n}) - \bar{V}(\bar x)$, where $\lambda_{p_n}$ and $\lambda_{d_n}$ are two distinct eigenvalues of the matrix $A$, \ie, $p_n, d_n \in \{1, 2, 3\}$ with $p_n \neq d_n$. Let $\bar{x}^*=(\bar{R}^*_1, \bar{R}^*_2, \ldots, \bar{R}^*_M, \omega^*_1, \omega^*_2, \ldots, \omega^*_N) \in \bar \Upsilon_0 \setminus \bar{\mathcal{A}}_0$ with $\bar{R}^*_n = \mathcal{R}(\pi, u_{l_n})$ and $\omega_i^* = 0$, for $n \in \mathcal{M}^\pi$, where $l_n \in \{1, 2, 3\}$ satisfies $l_n \neq p_n$ and $l_n \neq d_n$. Notice that $\bar{V}^*(\bar{x}^*) = 0$. Define the set ${\mathbb{B}}_r := \{ (\bar{R}_1, \bar{R}_2, \dots, \bar{R}_M, \omega_1, \omega_2, \dots, \omega_N) \in \bar{\mathcal{S}} : |\bar{R}_1^\top \bar{R}_1^*|_I + |\bar{R}_2^\top \bar{R}_2^*|_I + \dots + |\bar{R}_M^\top \bar{R}_M^*|_I+||\omega_1||+||\omega_2||+\dots+||\omega_N|| \leq  r \}$ with $r > 0$. From the fact that the set $\bar \Upsilon_0 \setminus \bar{\mathcal{A}}_0$ consists only the saddle points of $\bar{V}(\bar x)$, one can verify that the set ${\mathbb{U}} = \{ \bar x \in {\mathbb{B}}_{r} \mid \bar{V}^*(\bar x) > 0 \}$ is non-empty. Moreover, from \eqref{v_z_dot}, one has $\dot{\bar{V}}^*(\bar x) = -\dot{\bar{V}}(\bar x) > 0$ in ${\mathbb{U}}$, which implies that any trajectory starting in the set ${\mathbb{U}}$ must exit ${\mathbb{U}}$ from boundary surface of ${\mathbb{B}}_{r}$. According to \textit{Chetaev's theorem} \cite{khalil2002nonlinear}, this shows that all points in the undesired equilibrium set $\bar \Upsilon_0 \setminus \bar{\mathcal{A}}_0$ are unstable. This completes the proof of item \eqref{dyn_unstability_of_equilibrium}. 

Now, to show that the stable manifold associated with the undesired equilibrium points has zero Lebesgue measure, we derive the Jacobian matrix corresponding to dynamics \eqref{w_dynamics_i} and \eqref{R_bar_dynamics_k} under the control torque \eqref{torque_i} at the undesired equilibria. 
Let $R_i = R^*_i \exp{\left([r_i]^\times\right)}$, $R_j = R^*_j \exp{\left([r_j]^\times\right)}$, and $\omega_i = y_i$, where $r_i,\, r_j,\, r_j-r_i,\, y_i\in \mathbb{R}^3$ are sufficiently small and $\bar R_k^* = R^*_j\left(R_i^*\right)^\top \in \{\bar R_n, \bar R_m\}$ for each $\{k\}=\mathcal{M}_i^+ \cap \mathcal{M}_j^-$. It follows that $\bar R_k = R_j R_i^\top=R^*_j \exp{\left([r_j-r_i]^\times\right)} \left(R^*_j\right)^\top$. Considering the first-order approximation $\exp\big([r]^\times\big) \approx I_3 + [r]^\times$ for sufficiently small $r$, we obtain the linearization of $\bar R_k$ and $\omega_i$ around any point in the undesired equilibrium set $\Upsilon_z \setminus \mathcal{A}_z$ as $\bar R_k = \bar R_k^*(I_3 + [\bar r_k]^\times)$ and $\omega_i = y_i$, respectively, for every $\{k\}=\mathcal{M}_i^+ \cap \mathcal{M}_j^-$ and $i \in \mathcal{V}$, where $\bar r_k = R_i^*(r_j-r_i)$. Define $\bar r:=\left[\bar r_1^\top, \bar r_2^\top,\dots, \bar r_M^\top\right]^\top \in \mathbb{R}^{3M}$ and $y:=\left[y_1^\top, y_2^\top,\dots, y_N^\top\right]^\top \in \mathbb{R}^{3N}$. It follows from the dynamics \eqref{R_dynamics_i}-\eqref{w_dynamics_i} under the control torque \eqref{torque_i} that 
\begin{align}\label{linear_sys_und_w}
\begin{bmatrix}
    \dot{\bar r} \\[4pt] \dot y
\end{bmatrix} =
\mathcal{J}
\begin{bmatrix}
    \bar r \\[4pt] y
\end{bmatrix},
\end{align}
where $$\mathcal{J}:=\begin{pmatrix}
    0_{3M\times 3M} & -\mathbf{H}_c^\top\, \mathbf{R}_c\\
    \frac{k_R}{2} \mathbf{J}^{-1} \mathbf{R}_c^\top\,\mathbf{H}_c\, \mathbf{A}^* & -\mathbf{J}^{-1} \left(k_\omega I_{3N}+\bar k_\omega (\mathcal{L}\otimes I_3)\right) 
\end{pmatrix}$$ with $\mathbf{R}_c :=\text{diag}(R^*_1, R^*_2, \hdots, R^*_N) \in \mathbb{R}^{3N \times 3N}$. Note that $\mathbf{R}_c$ is constant matrix. Moreover, from the facts that $\mathbf{J}$ is positive definite, $\mathbf{A}^*$ is invertible (its eigenvalues are nonzero), $\mathbf{H}_c$ has full column rank, and $\mathcal{L}$ is positive semidefinite, one can verify that 

{\small \begin{align}\label{det_M}
  &\text{det}(\mathcal{J}) = \frac{k_R}{2}\det(\mathbf{J}^{-1}) 
   \text{det}\left( k_\omega I_{3N} + \bar k_\omega (\mathcal{L}\otimes I_3)\right) \nonumber\\
  & \times 
   \text{det}\left(\mathbf{H}_c^\top \mathbf{R}_c \bigl(k_\omega I_{3N} + \bar k_\omega (\mathcal{L}\otimes I_3)\bigr)^{-1}\mathbf{R}_c^\top\mathbf{H}_c\right)
   \text{det}(\mathbf{A}^*) \neq 0.\nonumber
\end{align}}This implies that $\mathcal{J}$ does not have an eigenvalue at the origin. Now, we use a proof by contradiction to show that the matrix $\mathcal{J}$ does not have any eigenvalues on the imaginary axis. Assume that $\mathcal{J}$ has an imaginary eigenvalue $i\lambda$, where $i^2=-1$ and $\lambda \in \mathbb{R}\setminus\{0\}$, with a corresponding nonzero eigenvector
\[
z=\begin{pmatrix}z_1\\z_2\end{pmatrix}, \quad z_1\in\mathbb{R}^{3M},\; z_2\in\mathbb{R}^{3N},
\]
such that $\mathcal{J}z=i\lambda z$. It follows from \eqref{linear_sys_und} that 
\begin{align}
&-\mathbf{H}_c^\top\, \mathbf{R}_c\, z_2 = i\lambda\, z_1, \label{sys_equ_1}\\
\frac{k_R}{2} \mathbf{J}^{-1}\mathbf{R}_c^\top \mathbf{H}_c \mathbf{A}^* z_1 -&\mathbf{J}^{-1} \left(k_\omega I_{3N}+\bar k_\omega (\mathcal{L}\otimes I_3)\right) z_2 = i\lambda z_2. \label{sys_equ_2}
\end{align}
Equation \eqref{sys_equ_1} can be rewritten as follows: 
\begin{equation}\label{sys_equ_3}
    z_1 = \frac{i}{\lambda} \mathbf{H}_c^\top \mathbf{R}_c\, z_2.
\end{equation}
From \eqref{sys_equ_2} and \eqref{sys_equ_3}, one has 
\begin{align}\label{sys_equ_4}
     i\lambda z_2&=\frac{ik_R}{2\lambda} \mathbf{J}^{-1}\mathbf{R}_c^\top \mathbf{H}_c \mathbf{A}^*\mathbf H_c^\top \mathbf R_c z_2\nonumber\\
     &~~~~~~~~~~~~~~~~~~-\mathbf{J}^{-1} \left(k_\omega I_{3N}+\bar k_\omega (\mathcal{L}\otimes I_3)\right) z_2.
\end{align}
Furthermore, by multiplying \eqref{sys_equ_4} on the left by $z_2^\top \mathbf{J}$ and rearranging it, one obtains:
\begin{align}\label{sys_equ_5}
    &\left(\frac{k_R}{2\lambda} z_2^\top\mathbf{R}_c^\top\mathbf{H}_c \mathbf{A}^*\mathbf H_c^\top \mathbf R_c z_2-\lambda z_2^\top \mathbf{J} z_2 \right)i\nonumber\\
    &~~~~~~~~~~~~~~~~~-z_2^\top \left(k_\omega I_{3N}+\bar k_\omega (\mathcal{L}\otimes I_3)\right) z_2 =0.
\end{align}
It follows from the real part of \eqref{sys_equ_5} that
\begin{align}
    k_\omega ||z_2||^2 +\bar k_\omega ||(H^\top\otimes I_3) z_2||^2 =0,
\end{align}
which implies that $z_2 = 0$. This further implies that $z_1 = 0$ according to \eqref{sys_equ_3}. Hence, the corresponding eigenvector is the zero vector, which is a contradiction. This means that the matrix $\mathcal{J}$ does not have any eigenvalues on the imaginary axis. Moreover, since all points in the undesired equilibrium set $\bar \Upsilon_0 \setminus \bar{\mathcal{A}}_0$ are unstable and the linearized system \eqref{linear_sys_und} does not have an eigenvalue at the origin, the matrix $\mathcal{J}$ must have at least one eigenvalue with a positive real part. Therefore, by the \textit{stable manifold theorem},
the stable manifold associated with the undesired equilibrium set $\bar \Upsilon_0 \setminus \bar{\mathcal{A}}_0$ has zero Lebesgue measure. Thus, the equilibrium set $\mathcal{A}_0$ is almost globally asymptotically stable. This completes the proof of item \eqref{dyn_stability_of_equilibrium}.

\section{Proof of Proposition \ref{theorem_3}} \label{app_3}

Before proceeding with the proof of item \eqref{dyn_c_set_of_equilibrium}, we first show that $(\mathbf{1}_N \otimes I_3)^\top \mathbf{R}^\top \mathbf{H} \Psi = 0$, since this identity will be used in the subsequent calculations. From derivations presented in \eqref{w_1}-\eqref{w_c}, one can verify that

{\small
\begin{align}\label{eqq2}
    (\mathbf{1}_N \otimes I_3)^\top \mathbf{R}^\top \mathbf{H} \Psi &= \frac{1}{2} \sum_{i=1}^N \sum_{j \in \mathcal{N}_i} \sum_{l=1}^n \rho_\ell \left(R_j^\top a_\ell \times R_i^\top a_\ell \right).
\end{align}}Using the facts that $x \times y = 2 \psi(y x^\top)$ for every $x, y \in \mathbb{R}^3$, equation \eqref{eqq2} can be rewrite as follows

{\small
\begin{align}
    (\mathbf{1}_N \otimes I_3)^\top \mathbf{R}^\top \mathbf{H} \Psi &=\sum_{i=1}^N \sum_{j \in \mathcal{N}_i}\psi \left(R_i^\top A R_j\right).
\end{align}}Recall that $A=\sum_{\ell=1}^n \rho_\ell a_\ell a_\ell^\top$. Furthermore, one has
 \begin{align}
    (\mathbf{1}_N \otimes I_3)^\top \mathbf{R}^\top \mathbf{H} \Psi &=\sum_{i=1}^N \sum_{j=1}^N d_{ij}  \psi \left(R_i^\top A R_j\right), \label{eqq4}
\end{align}
where $d_{ij}$ is the $(i, j)$ entry of the adjacency matrix $D$ corresponding to the graph $\mathcal{G}$. It follows from \eqref{eqq4} that 
{\small
\begin{align}\label{eqq5}
     &(\mathbf{1}_N \otimes I_3)^\top \mathbf{R}^\top \mathbf{H} \Psi\nonumber\\
     &~~=\frac{1}{2}\left(\sum_{i=1}^N \sum_{j=1}^N d_{ij} \psi \left(R_i^\top A R_j\right)+\sum_{i=1}^N \sum_{j=1}^N d_{ij} \psi \left(R_i^\top A R_j\right) \right)\nonumber\\
     &~~=\frac{1}{2}\left(\sum_{i=1}^N \sum_{j=1}^N d_{ij} \psi \left(R_i^\top A R_j\right)-\sum_{i=1}^N \sum_{j=1}^N d_{ij} \psi \left(R_j^\top A R_i\right) \right).
 \end{align}
 }The fact that $\psi(B) = -\psi(B^\top)$, for every $B \in \mathbb{R}^{3 \times 3}$, has been used to derive the last equation. Since the graph $\mathcal{G}$ is undirected, one checks that $d_{ij} = d_{ji}$. Taking this into account and rearranging the order of summation indices in the second part of equation \eqref{eqq5}, we obtain
 {\small
 \begin{align*}
     &(\mathbf{1}_N \otimes I_3)^\top \mathbf{R}^\top \mathbf{H} \Psi\\
     &=\frac{1}{2}\left(\sum_{i=1}^N \sum_{j=1}^N d_{ij} \psi \left(R_i^\top A R_j\right)-\sum_{j=1}^N \sum_{i=1}^N d_{ji} \psi \left(R_j^\top A R_i\right) \right)\nonumber\\
     &=\frac{1}{2}\left(\sum_{i=1}^N \sum_{j=1}^N d_{ij} \psi \left(R_i^\top A R_j\right)-\sum_{i=1}^N \sum_{j=1}^N d_{ij} \psi \left(R_i^\top A R_j\right) \right)=0. \nonumber
 \end{align*}
 }Now, consider the potential function $\bar V$ defined in \eqref{pf}. The time-derivative of $\bar{V}$, considering the dynamics \eqref{w_dynamics_i} and \eqref{R_bar_dynamics_k} with $k_\omega = 0$ and $\bar{k}_\omega >0$, is 
$
    \dot{{\bar V}} = -2 \bar k_\omega ||(H^\top \otimes I_3) \omega||^2.
$
From the fact that $\dot{{\bar{V}}}\leq 0$, it follows that $\omega_i$ is bounded, and consequently, $\Ddot{\bar{{V}}}$ is also bounded. This, in turn, implies that $\dot{\bar{V}}$ is uniformly continuous. Thus, by Barbalat’s lemma, any solution $\bar x$ to the closed-loop system \eqref{w_dynamics_i} and \eqref{R_bar_dynamics_k} must converge to the largest invariant set contained in the set characterized by \(\{\bar{x} \in \bar{\mathcal{S}} \mid \dot{\bar{ V}}(\bar x)=0\}\). The condition  $\dot{\bar{V}}(\bar x)=0$ implies that $\omega = \mathbf{1}_N \otimes \Omega$, which in turn leads to $\dot{\omega} = \mathbf{1}_N \otimes \dot{\Omega}$. Considering this fact, it follows from \eqref{w_total_dyn} that
$$
    \mathbf{J} \left(\mathbf{1}_N \otimes \dot{\Omega} \right) =k_R \mathbf{R}^\top \mathbf{H} \Psi.
$$
Multiply both sides by $(\mathbf{1}_N \otimes I_3)^\top$ yields:
\begin{equation}\label{w_total_dyn_LaS_1}
   (\mathbf{1}_N \otimes I_3)^\top \mathbf{J} \left(\mathbf{1}_N \otimes \dot{\Omega} \right) =k_R (\mathbf{1}_N \otimes I_3)^\top \mathbf{R}^\top \mathbf{H} \Psi.
\end{equation}
Since $(\mathbf{1}_N \otimes I_3)^\top \mathbf{R}^\top \mathbf{H} \Psi = 0$, it follows from \eqref{w_total_dyn_LaS_1} that 
\begin{equation}\label{w_total_dyn_LaS_2}
   (\mathbf{1}_N \otimes I_3)^\top \mathbf{J} \left(\mathbf{1}_N \otimes \dot{\Omega} \right) = \left( \sum_{i=1}^N J_i\right) \dot \Omega=0.
\end{equation}
Since the matrix \( J_i \) is positive definite for every \( i \in \mathcal{V} \), it follows from \eqref{w_total_dyn_LaS_2} that \( \dot{\Omega} = 0 \). 
Thus, going back to Equation \eqref{w_total_dyn}, one has 
 \begin{equation}\label{eqq7}
     \mathbf{R}^\top \mathbf{H} \Psi = 0.
 \end{equation}
 Again, considering \cite[Lemma 2]{Mouaad_ACC23} and the arguments in \cite[Lemma 2]{Mayhew_ACC2011}, equation \eqref{eqq7} implies that any solution $\bar{x}$ of the dynamics \eqref{w_dynamics_i} and \eqref{R_bar_dynamics_k}, with \eqref{torque_i} for $k_\omega = 0$ and $\bar{k}_\omega >0$, converges to the set $\bar{\Upsilon}_c$. This completes the proof of item \eqref{dyn_c_set_of_equilibrium}.

To prove item \eqref{converge_item}, we need to establish the existence of a positive scalar $c>0$ used in inequality \eqref{initial_cond}. For every $R=\mathcal{R}(\pi,u)$, with $u\in\mathbb{S}^2$, one has
\begin{align}\label{tr_ineq}
    4\,\lambda_{\min}^{A_c} \leq \mathrm{tr}\!\left(A(I-R)\right) \leq 4\,\lambda_{\max}^{A_c},
\end{align}
where $A_c=\tfrac{1}{2}(\mathrm{tr}(A)I_3-A)$, $\lambda_{\min}^{A_c}$ and $\lambda_{\max}^{A_c}$ denote the minimum and maximum eigenvalue of $A_c$, respectively. One can verify that the matrix $A_c$ is positive definite, \ie, $\lambda_{\min}^{A_c}>0$. It follows from \eqref{tr_ineq} that $\mathrm{tr}\!\left(A(I-R)\right) < 4\,\lambda_{\min}^{A_c}$ for any $R\neq\mathcal{R}(\pi,u)$. Let $c = 4\,k_R\,\lambda_{\min}^{A_c}$. Since $\dot{\bar V}\leq 0$, it follows that
\begin{align}\label{v_ineq}
    \bar V(\bar x(t)) < 4\,k_R\,\lambda_{\min}^{A_c},
\end{align}
for every $\bar x(0)$ satisfies inequality \eqref{initial_cond} and all $t\geq0$. Consequently, \eqref{v_ineq} implies that
\begin{equation}
    \mathrm{tr}\!\left(A(I-\bar R_k(t))\right) < 4\,\lambda_{\min}^{A_c},
\end{equation}
for every $k\in\mathcal{M}$ and all $t\geq0$. Therefore, $\bar R_k(t)\neq \mathcal{R}(\pi,u)$ for all $t\geq0$, and thus $\bar R_k(t)$ must converge to the identity. Hence, if $\bar x(0)$ satisfies inequality \eqref{initial_cond} with $c=4\,k_R\,\lambda_{\min}^{A_c}$, the solution of \eqref{w_dynamics_i} and \eqref{R_bar_dynamics_k} under the control law \eqref{torque_i} converges to the desired set $\bar{\mathcal{A}}_c$. This completes the proof of item \eqref{converge_item}.

\section{Proof of Theorem \ref{theorem_4}} \label{app_4}
Similar to the derivative calculations employed at the beginning of the proof of Theorem~\ref{theorem_1}, one obtains
\begin{equation}\label{w_leader_c}
     \omega = k_R \mathbf{R}^\top \mathbf{H} \Psi + k_1\, \mathbf{R}^\top \, \Psi_1,
 \end{equation}
where $\Psi_1:=\left[\psi(\bar A \tilde{R}_1)^\top~~~0_{1\times3(N-1)}\right]^\top$. Now, consider the following Lyapunov function candidate:
\begin{equation}
    V^r(x^r) =k_R\, \sum_{k=1}^{M} \text{tr}\left(A(I_3-\bar{R}_k)\right)+k_1\, \text{tr}(\bar A(I_3-\tilde{R}_1)),
\end{equation}
which is positive definite on $\mathcal{S}^r$ with respect to $\mathcal{A}^r$. From the dynamics \eqref{R_bar_dynamics_k} and \eqref{R1_tilde}, and under the control law \eqref{w_leader}, the time derivative of the Lyapunov function $V^r(x^r)$ along the system trajectories is given by
\begin{align}
    \dot{V}^r(x^r)&=2k_R\,\bar \omega^\top \Psi-2k_1\,\omega^\top\mathbf{R}^\top\Psi_1.
\end{align} 
 In deriving the last equality, we used the identities$\text{tr}\left(B[x]^\times\right)=\text{tr}\left(\mathbb{P}_a(B)[x]^\times\right)$ and $\text{tr}\left([x]^\times [y]^\times\right)=-2x^\top y$, $\forall x, y\in \mathbb{R}^3$ and $\forall B \in \mathbb{R}^{3\times3}$. In view of \eqref{w_bar}, one has 
\begin{align}
    \dot{V}^r(x^r)&=-2 \omega^\top \left(k_R\,\mathbf{R}^\top \mathbf{H}\, \Psi +k_1\,\mathbf{R}^\top\Psi_1\right).
\end{align}
Under the control law \eqref{w_leader}, it follows that 
\begin{equation}\label{V_r_dot1}
    \dot V(x)=-2\, ||k_R\,\mathbf{R}^\top \mathbf{H}\, \Psi +k_1\,\mathbf{R}^\top\Psi_1||^2 \leq 0.
\end{equation}
Therefore, the equilibrium set \(\mathcal{A}^r\) is stable. Moreover, by LaSalle’s invariance principle, all solutions \(x^r\) of the systems
\eqref{R_bar_dynamics_k} and \eqref{R1_tilde} under \eqref{fc_vector_meas} converge to the largest invariant
set contained in $\left\{ x^r\in \mathcal{S}^r \;:\; \dot{V}^r(x^r)=0 \right\}$, which is characterized by $k_R\,\mathbf{H}\,\Psi + k_1\,\Psi_1 = 0$. As in \cite{li2026}, let \(\mathbf{H}=\left[\mathbf{H}_1^\top \;\; \mathbf{H}_2^\top\right]^\top\), where
\(\mathbf{H}_1 \in \mathbb{R}^{3\times 3M}\) and
\(\mathbf{H}_2 \in \mathbb{R}^{3(N-1)\times 3M}\). Then, the condition
\(\dot{V}^r(x^r)=0\) implies
\begin{align}
    k_R\, \mathbf{H}_1 \,\Psi + k_1\, \psi(\bar A \tilde{R}_1) &= 0, \label{eq_r_1}\\
    k_R\, \mathbf{H}_2 \,\Psi &= 0. \label{eq_r_2}
\end{align}
Since \(\mathbf{H}_2\) is nonsingular for all \(t \geq 0\) as shown in \cite{li2026}, it follows from \eqref{eq_r_1}--\eqref{eq_r_2} that $\Psi = 0$ and $\psi(\bar A \tilde{R}_1)=0$. Furthermore, from \cite[Lemma 2]{Mayhew_ACC2011}, one can show that every solution \(x^r\) of the dynamics \eqref{R_bar_dynamics_k} and \eqref{R1_tilde}, under the control law \eqref{w_leader}, must converge to the set \(\Upsilon^r\). This concludes the proof of item \eqref{set_of_equilibrium_leader}.

To prove items \eqref{unstability_of_equilibrium_leader} and \eqref{stability_of_equilibrium_leader}, we follow an argument analogous to that used in the proof of Theorem~\ref{theorem_1}. Specifically, We derive the Jacobian matrix associated with the dynamics given in \eqref{R_bar_dynamics_k} and \eqref{R1_tilde} under the feedback control law \eqref{w_leader}, evaluated at the undesired equilibria. Let $\bar{R}_k = \bar{R}_k^* \exp\left([\bar{r}_k]^\times\right)$ and $\tilde{R}_1 = \tilde{R}_1^* \exp\left([\tilde{r}_1]^\times\right)$, where $\bar{r}_k, \tilde{r}_1 \in \mathbb{R}^3$ are sufficiently small, $\bar{R}_k^* \in \{\bar{R}_n, \bar{R}_m\}$, and $\tilde{R}_1^* \in \{I_3, \mathcal{R}(\pi, \bar{u})\}$. Using the first-order approximation $\exp\big([z]^\times\big) \approx I_3 + [z]^\times$ for sufficiently small $z$, one obtains the following linearized system
 \begin{align}\label{linear_sys_und_leader}
\begin{bmatrix}
    \dot{\bar r} \\[4pt] \dot{\tilde r}_1
\end{bmatrix} =
\bar{\mathcal{J}}
\begin{bmatrix}
    \bar r \\[4pt] \tilde r_1
\end{bmatrix},
\end{align}
where $\bar{\mathcal{J}}:=\begin{pmatrix}
         -\frac{k_R}{2}\, \mathbf{H}_c^\top \mathbf{H}_c\, \mathbf{A}^* & 0_{3M\times3}\\
         k_R\, \mathbf{H}_{c}^1\, \mathbf{A}^* & k_1\, \tilde{A}_1
\end{pmatrix}$, 
with $\tilde{A}^*_1 = \text{tr}(\bar{A} \tilde{R}^*_1) I_3 - \bar{A} \tilde{R}^*_1$ and $\mathbf{H}_{c}^1 \in \mathbb{R}^{3\times 3M}$ is the submatrix consisting of the first three rows of $\mathbf{H}_c$. The matrix $\bar{\mathcal{J}}$ represents the Jacobian matrix associated with the dynamics given in \eqref{R_bar_dynamics_k} and \eqref{R1_tilde} under the feedback control law \eqref{w_leader}, evaluated at the undesired equilibria. The matrix $\bar{\mathcal{J}}$ is a lower triangular matrix, hence, the eigenvalues of $\bar{\mathcal{J}}$ are the union of the eigenvalues of $-\frac{k_R}{2}\, \mathbf{H}_c^\top \mathbf{H}_c\, \mathbf{A}^*$ and $k_1\, \tilde{A}_1$. Note that since $\bar{A}$ is a positive semi-definite matrix with three distinct eigenvalues, $\tilde{A}^*_1$ has strictly positive eigenvalues when $\tilde{R}_1 = I_3$. However, when $\tilde{R}_1 = \mathcal{R}(\pi, \bar{u})$, the eigenvalues of $\tilde{A}^*_1$ can either be all negative, or a mix of both positive and negative values. From this and the fact that $\mathbf{H}_c^\top \mathbf{H}_c\, \mathbf{A}^*$ and $\mathbf{A}^*$ have the same number of positive and negative eigenvalues as shown in the proof of Theorem \ref{theorem_1}, one can verify that at the undesired equilibria, the matrix $\bar{\mathcal{J}}$ has at least one positive eigenvalue and no zero eigenvalue. Consequently, the set of all undesired equilibrium points $\Upsilon^r \setminus \mathcal{A}^r$ is unstable. Furthermore, by the \textit{stable manifold theorem},
the stable manifold associated with $\Upsilon^r \setminus \mathcal{A}^r$ has zero Lebesgue measure. Hence, the equilibrium set $\mathcal{A}^r$ is almost globally asymptotically stable.  

\section{Proof of Theorem \ref{theorem_5}} \label{app_5}
Considering the proposed torque \eqref{torque_i_leader}, the angular velocity dynamics can be rewritten in the following stacked form:
\begin{equation}\label{w_total_dyn_leader}
    \mathbf{J} \dot \omega =k_R \mathbf{R}^\top \mathbf{H} \Psi+ k_1\, \mathbf{R}^\top \, \Psi_1-k_\omega \omega - \bar k_\omega (\mathcal{L}\otimes I_3) \omega.
\end{equation}
To perform the stability analysis, we consider the following candidate Lyapunov function:
\begin{equation}\label{pf_leader}
    \bar V^r(\bar x^r) = k_R \sum_{k=1}^{M} \text{tr}\left(A(I_3-\bar{R}_k)\right)+k_1\, \text{tr}(\bar A(I_3-\tilde{R}_1))+\omega^\top \mathbf{J} \omega,\nonumber
\end{equation}
whose time derivative is computed as:
\begin{equation}\label{v_z_dot_leader}
    \dot{\bar V}^r(\bar x^r)=- 2 k_\omega ||\omega||^2 -2 \bar k_\omega ||(H^\top \otimes I_3) \omega||^2.
\end{equation}
One can verify that $\ddot{\bar V}^r$ is bounded, which implies that $\dot{\bar V}^r(\bar x^r)$ is uniformly continuous. Hence, by Barbalat’s lemma, one can verify that all solutions $\bar x^r$ converge to the largest invariant set contained in $\left\{ \bar x^r \in \bar{\mathcal S}^r \,:\, \dot{\bar V}^r(\bar x^r)=0 \right\}$. Furthermore, by chasing the signals, it can be shown that this set exactly equals $\bar{\Upsilon}_0^r$. Now, consider the smooth curve $\tilde{\varphi}_1: \mathbb{O} \to SO(3)$ defined as $\tilde{\varphi}_1(t) = \tilde{R}^*_1 \exp\left(t [\tilde{\zeta}_1]^\times\right)$, where $\tilde{R}^*_1 \in SO(3)$ and $ \tilde \zeta_1\in \mathbb{R}^3$. Let $ \bar{\mathbf{x}}^r(t) := \left(\varphi_1(t), \varphi_2(t), \ldots, \varphi_M(t), \tilde{\varphi}_1(t), \gamma_1(t), \gamma_2(t), \ldots, \gamma_N(t)\right) \in \bar{\mathcal{S}}^r$, where the smooth curves $\varphi_k(t)$ and $\gamma_i(t)$ are defined in the proof of Theorem~\ref{theorem_2}. By calculating the first and second derivatives of the potential function \(\bar{V}^r(\bar{x}^r)\) with respect to time at \(t=0\), the gradient and Hessian of \(\bar{V}^r(\bar{x}^r)\) are obtained as follows:
\begin{align}
    \nabla_{\bar x^r} \bar V^r(\bar x^r) &= \begin{pmatrix}
        k_R\, \Psi\\
        k_1\, \Psi_1\\
        \mathbf{J}\, \omega
    \end{pmatrix} \label{grad_dyn_leader}\\
    \textit{Hess}\bar V^r(\bar x^r)&=\begin{pmatrix}
        k_R\,\mathbf{A}^* & 0_{3M\times3} & 0_{3M\times3N}\\
        0_{3\times 3M} & k_1\, \tilde{A}_1 & 0_{3 \times 3N}\\
        0_{3N\times 3M} & 0_{3N\times 3} & \mathbf{J}
    \end{pmatrix},\label{hess_dyn_leader}
\end{align}
where $\textit{Hess}\bar V^r(\bar x^r)$ is evaluated at $\bar x^r \in \bar \Upsilon^r_0 \setminus \bar{\mathcal{A}}^r_0$. From equation \eqref{grad_dyn_leader}, it follows that the set of critical points of $\bar V^r$ coincides with the equilibrium set $\bar{\Upsilon}_0^r$. Additionally, from equation \eqref{hess_dyn_leader}, we conclude that all critical points of $\bar V^r(\bar x^r)$ in the set $\bar{\Upsilon}_0^r \setminus \bar{\mathcal{A}}^r_0$ are saddle points of $\bar V^r(\bar x^r)$. We now proceed with the proof of the instability of the set $\bar{\Upsilon}_0^r \setminus \bar{\mathcal{A}}^r_0$. Define the real-valued function \( \bar V^r_*(\bar x^r): \bar{\mathcal{S}}^r \rightarrow \mathbb{R} \) such that $\bar V^r_*(\bar x^r) = 2\,\alpha\, k_R \sum_{n \in \mathcal{M}^\pi} (\lambda_{p_n} + \lambda_{d_n}) + 2\,\beta \,k_1 (\lambda_{p} + \lambda_{d})- \bar V^r(\bar x^r)$, where \( \lambda_{p} \) and \( \lambda_{d} \) are two distinct eigenvalues of the matrix \( \bar{A} \), \ie, \( p, d \in \{1, 2, 3\} \) with \( p \neq d \), and the values of \( \alpha \) and \( \beta \) are defined as follows:
\begin{itemize}
    \item \( \alpha = \beta = 1 \) if \( |\mathcal{M}^\pi| > 0 \) and \( \tilde{R}_1 = \mathcal{R}(\pi, \bar{u}_l) \),
    \item \( \alpha = 1 \) and \( \beta = 0 \) if \( |\mathcal{M}^\pi| > 0 \) and \( \tilde{R}_1 = I_3 \),
    \item \( \alpha = 0 \) and \( \beta = 1 \) if \( |\mathcal{M}^\pi| = 0 \) and \( \tilde{R}_1 = \mathcal{R}(\pi, \bar{u}_l) \),
\end{itemize}
 where \( l \in \{1, 2, 3\} \) with \( l \neq p \) and \( l \neq d \). Let $\bar{x}^r_*=(\bar{R}^*_1, \bar{R}^*_2, \dots, \bar{R}^*_M, \tilde R^*_1, \omega^*_1, \omega^*_2, \dots, \omega^*_N) \in \bar \Upsilon^r_0 \setminus \bar{\mathcal{A}}^r_0$ with $\bar{R}^*_n = \mathcal{R}(\pi, u_{l_n})$, $\tilde{R}^*_1 = \mathcal{R}(\pi, u_{l})$ and $\omega^*_i=0$, $\forall n \in \mathcal{M}^\pi$, where $l_n,\, l \in \{1, 2, 3\}$ satisfy $l_n \neq p_n$, $l_n \neq d_n$, $l \neq p$ and $l \neq d$. One can verify $\bar{V}^r_*(\bar{x}^r_*) = 0$. Consider the set $\bar{\bar{\mathbb{B}}}_{\bar{\bar{r}}} := \{ (\bar{R}_1, \bar{R}_2, \dots, \bar{R}_M, \tilde R_1, \omega_1, \omega_2, \dots, \omega_N) \in \bar{\mathcal{S}}^r : |\bar{R}_1^\top \bar{R}_1^*|_I + |\bar{R}_2^\top \bar{R}_2^*|_I + \dots + |\bar{R}_M^\top \bar{R}_M^*|_I+ |\tilde{R}_1^\top \tilde{R}_1^*|_I+||\omega_1-\omega^*_1||+||\omega_2-\omega^*_2||+\dots+||\omega_N-\omega^*_N|| \leq \bar{\bar{r}}\}$ with $\bar{\bar{r}}> 0$. Since the set $\bar \Upsilon^r_0 \setminus \bar{\mathcal{A}}^r_0$ consists only the saddle points of $\bar{V}^r(\bar x^r)$, it follows that the set $\bar{\bar{\mathbb{U}}} = \{ \bar x^r \in \bar{\bar{\mathbb{B}}}_{\bar{\bar{r}}} \mid \bar{V}^r_*(\bar x^r) > 0 \}$ is non-empty. Moreover, one can verify that $\dot{\bar{ V}}^r_*(\bar x^r) = -\dot{\bar{V}}^r(\bar x^r) > 0$ in $\bar{\bar{\mathbb{U}}}$, which implies that any trajectory starting in the set $\bar{\bar{\mathbb{U}}}$ must exit $\bar{\bar{\mathbb{U}}}$ from boundary surface of $\bar{\bar{\mathbb{B}}}_{\bar{\bar{r}}}$. Applying \textit{Chetaev's theorem}, one can show that all points in the undesired equilibrium set $\bar \Upsilon^r_0 \setminus \bar{\mathcal{A}}^r_0$ are unstable. 

To prove item \eqref{dyn_stability_of_equilibrium_leader}, similar to the proof of Theorem \ref{theorem_2}, we will derive the Jacobian matrix corresponding to the dynamics in \eqref{w_dynamics_i}, \eqref{R_bar_dynamics_k}, and \eqref{R1_tilde} under the feedback control law \eqref{torque_i_leader}. Using the first-order approximations from the proofs of items \eqref{dyn_stability_of_equilibrium} and \eqref{stability_of_equilibrium_leader} in Theorems \ref{theorem_2} and \ref{theorem_4}, respectively, one derives the following linearized system
 \begin{align}\label{linear_sys_und_leader}
\begin{bmatrix}
    \dot{\bar r} \\[4pt] \dot{\tilde r}_1\\ \dot y
\end{bmatrix} =
\bar{\bar{\mathcal{J}}}
\begin{bmatrix}
    \bar r \\[4pt] \tilde r_1\\ y
\end{bmatrix},
\end{align}
where $$\bar{\bar{\mathcal{J}}}:=\begin{pmatrix}
        0_{3M\times 3M} & 0_{3M\times 3} & \bar{\bar{\mathcal{J}}}_{13}\\
        0_{3 \times 3M} & 0_{3\times 3} & \bar{\bar{\mathcal{J}}}_{23}\\
         \bar{\bar{\mathcal{J}}}_{31} &  \bar{\bar{\mathcal{J}}}_{32}& \bar{\bar{\mathcal{J}}}_{33} 
    \end{pmatrix},$$
with $\bar{\bar{\mathcal{J}}}_{13}=-\mathbf{H}_c^\top\, \mathbf{R}_c$, $\bar{\bar{\mathcal{J}}}_{23}=\left[R^*_1~~0_{3\times3(N-1)}\right]$, $\bar{\bar{\mathcal{J}}}_{31}=\frac{k_R}{2} \mathbf{J}^{-1} \mathbf{R}_c\,\mathbf{H}_c\, \mathbf{A}^* $, $\bar{\bar{\mathcal{J}}}_{32}=\frac{-k_R}{2} \mathbf{J}^{-1} \begin{bmatrix}
             R^*_1\, \tilde{A}_1 \\ 0_{3(N-1)\times3} 
         \end{bmatrix}$, and $\bar{\bar{\mathcal{J}}}_{33}=-\mathbf{J}^{-1} \left(k_\omega I_{3N}+\bar k_\omega (\mathcal{L}\otimes I_3)\right)$. 
The matrix $\bar{\bar{\mathcal{J}}}$ is the Jacobian matrix corresponding to the dynamics in \eqref{w_dynamics_i}, \eqref{R_bar_dynamics_k}, and \eqref{R1_tilde} under the feedback control law \eqref{torque_i_leader}. Now, we proceed with a proof by contradiction to demonstrate that the matrix $\bar{\bar{\mathcal{J}}}$ does not have any eigenvalues at the origin or purely imaginary. We begin by considering the case of a zero eigenvalue. Assume that $\bar{\bar{\mathcal{J}}}$ has an eigenvalue at the origin, \ie, 
\begin{align}\label{jac_zero}
    \bar{\bar{\mathcal{J}}}\,\begin{bmatrix}
        \bar z_1\\
        \bar z_2\\
        \bar z_3
    \end{bmatrix}=0,
\end{align}
where $\bar z_1 \in \mathbb{R}^{3M}$, $\bar z_2 \in \mathbb{R}^{3}$, and $\bar z_3 \in \mathbb{R}^{3N}$ with $\left[\bar z_1^\top\,\,\bar z_2^\top\,\, \bar z_3^\top\right]^\top \neq 0$. Equation \eqref{jac_zero} implies that 
\begin{align}
    \bar{\bar{\mathcal{J}}}_{13}\, \bar z_3 &= 0 \label{jac_zero_1}\\
    \bar{\bar{\mathcal{J}}}_{23}\, \bar z_3 &= 0 \label{jac_zero_2}\\
    \bar{\bar{\mathcal{J}}}_{31}\, \bar z_1 + \bar{\bar{\mathcal{J}}}_{32}\, \bar z_2 + \bar{\bar{\mathcal{J}}}_{33}\, \bar z_3 &= 0.\label{jac_zero_3}
\end{align}
From \eqref{jac_zero_2}, we conclude that $\bar z_3^1 = 0$, where $\bar z_3^1 \in \mathbb{R}^3$ represents the first three entries of the vector $\bar z_3$. Let \(\mathbf{H}_c=\left[(\mathbf{H}_c^1)^\top \;\; (\mathbf{H}_c^2)^\top\right]^\top\), where
\(\mathbf{H}_c^1 \in \mathbb{R}^{3\times 3M}\) and
\(\mathbf{H}_c^2 \in \mathbb{R}^{3(N-1)\times 3M}\). Following similar steps as in \cite{li2026}, it can be shown that the matrix $\mathbf{H}_c^2$ is nonsingular. Given that the matrix $\mathbf{R}_c$ contains only rotation matrices on its diagonal, it follows from \eqref{jac_zero_1} that $\bar z_3 = 0$. From this and Equation \eqref{jac_zero_3}, one has 
\begin{align}
    -\frac{k_1}{2} \tilde{A}_1\, \bar z_2 + \frac{k_R}{2} \mathbf{H}_c^1 \, \mathbf{A}^*\, \bar z_1 &=0\\
    \frac{k_R}{2}\mathbf{H}_c^2\, \mathbf{A}^* \, \bar z_2 &=0.
\end{align}
Since the matrices $\mathbf{H}_c^2$, $\mathbf{A}^*$, and $\tilde{A}_1$ are nonsingular, it follows that $\bar z_1 = 0$ and $\bar z_2 = 0$. This leads to a contradiction, as eigenvectors cannot be zero by definition. Therefore, the Jacobian matrix $\bar{\bar{\mathcal{J}}}$ does not have an eigenvalue at the origin. Next, we assume that the matrix $\bar{\bar{\mathcal{J}}}$ has purely imaginary eigenvalues, \ie,
\begin{align}
    \bar{\bar{\mathcal{J}}}_{13}\, \bar z_3 &= i\,\lambda\, \bar z_1 \label{jac_imag_1}\\
    \bar{\bar{\mathcal{J}}}_{23}\, \bar z_3 &= i\,\lambda\, \bar z_2 \label{jac_imag_2}\\
    \bar{\bar{\mathcal{J}}}_{31}\, \bar z_1 + \bar{\bar{\mathcal{J}}}_{32}\, \bar z_2 + \bar{\bar{\mathcal{J}}}_{33}\, \bar z_3 &= i\,\lambda\, \bar z_3,\label{jac_imag_3}
\end{align}
with $\left[\bar z_1^\top\,\,\bar z_2^\top\,\, \bar z_3^\top\right]^\top \neq 0$ and $\lambda \neq 0$. After straightforward manipulation of \eqref{jac_imag_1}–\eqref{jac_imag_3}, we obtain the following:
\begin{align}
    -\frac{i}{\lambda}\bar{\bar{\mathcal{J}}}_{13}\, \bar z_3 &= \bar z_1 \nonumber\\
    -\frac{i}{\lambda}\bar{\bar{\mathcal{J}}}_{23}\, \bar z_3 &= \bar z_2 \nonumber\\
     -i\left(\frac{1}{\lambda}\bar{\bar{\mathcal{J}}}_{31}\, \bar{\bar{\mathcal{J}}}_{13} -\frac{1}{\lambda} \bar{\bar{\mathcal{J}}}_{32}\, \bar{\bar{\mathcal{J}}}_{23}- \lambda\,I_{3N}\right) \bar z_3 + \bar{\bar{\mathcal{J}}}_{33}\, \bar z_3 &=0,\label{jac_imag_32}
\end{align}
Now, by multiplying \eqref{jac_imag_32} from the left by $\bar z_3 \mathbf{J}^{-1}$ and considering only the real part of the resulting equation, we get:
\begin{align}
    \bar z_3\, \mathbf{J}^{-1}\,\bar{\bar{\mathcal{J}}}_{33}\, \bar z_3 = k_\omega ||\bar z_3||^2 +\bar k_\omega ||(H^\top\otimes I_3) \bar z_3||^2 =0,
\end{align}
This implies that $\left[\bar z_1^\top ,, \bar z_2^\top ,, \bar z_3^\top\right]^\top = 0$, which contradicts the assumption that the eigenvectors are non-zero. Therefore, the Jacobian matrix $\bar{\bar{\mathcal{J}}}$ does not have any eigenvalues on the imaginary axis. However, all points in the undesired equilibrium set $\bar \Upsilon^r_0 \setminus \bar{\mathcal{A}}^r_0$ are unstable. Thus, the matrix $\bar{\bar{\mathcal{J}}}$ must have at least one eigenvalue with a positive real part. Moreover, by the \textit{stable manifold theorem},
the stable manifold associated with the undesired equilibrium set $\bar \Upsilon^r_0 \setminus \bar{\mathcal{A}}^r_0$ has zero Lebesgue measure. Consequently, the equilibrium set $\bar{\mathcal{A}}^r_0$ is almost globally asymptotically stable. This completes the proof of item \eqref{dyn_stability_of_equilibrium_leader}.



\bibliographystyle{IEEEtran}
\bibliography{References}

\end{document}